 \newtheorem{theorem}{Theorem}
 \newtheorem{proposition}{Proposition}
 \newtheorem{corollary}{Corollary}
\newtheorem{lemma}{Lemma}
\begin{document}




\title{Faster Sequential Search with a Two-Pass  Dynamic-Time-Warping  Lower Bound}



 \author{Daniel Lemire\footnote{
 LICEF, Universit\'e du Qu\'ebec \`a Montr\'eal (UQAM), 100 Sherbrooke West, Montreal~(Quebec), H2X 3P2 Canada}}

\date{}
\maketitle
\begin{abstract}
The Dynamic Time Warping (DTW) is a popular similarity measure between time series.
The DTW fails to satisfy the triangle inequality and its computation 
requires quadratic time.
Hence, to find  closest neighbors quickly, we use bounding 
techniques.
We can avoid most DTW computations with an inexpensive lower bound (LB\_Keogh).
We compare LB\_Keogh with a tighter lower bound (LB\_Improved). 
We find that LB\_Improved-based search is faster for sequential search.
 As an example, 
our approach is 3~times faster over random-walk and shape time series.
We also review some of the mathematical properties of the DTW\@. We derive
a tight triangle inequality for the DTW. We show that the DTW becomes the $l_1$ distance
when time series are separated by a constant. 
\end{abstract}



\section{Introduction}

Dynamic Time Warping (DTW) was initially introduced
to recognize spoken words~\cite{sakoe1978dpa}, but
it has since been applied to a wide range of information
retrieval and database problems:
handwriting recognition~\cite{bahlmann2004wio,niels2005udt},
 signature recognition~\cite{1219544,chang2007mdt}, image de-interlacing~\cite{almog2005},
  appearance matching for security purposes~\cite{Kale2004}, 
  whale vocalization classification~\cite{brown2006cvk}, 
  query by humming~\cite{4432642,zhu2003wie},
  classification of motor activities~\cite{muscillo2007cma},
 face localization~\cite{lopez2007flc},
 chromosome classification~\cite{legrand2007ccu},
  shape retrieval~\cite{TPAMI.2005.21,marzal2006cbs}, and so on. 
Unlike the Euclidean distance, DTW optimally aligns or ``warps'' 
the data points of
two time series (see Fig.~\ref{fig:example}). 

When the distance between two time series forms a metric, such as
the Euclidean distance or the Hamming distance,
several indexing or search techniques have been 
proposed~\cite{502808,1221301,958948,230528,1221193}.
However, even assuming that we have a metric, Weber et al. have shown
that the performance of any indexing scheme degrades  to that of a sequential scan,
when there are more than a few dimensions~\cite{671192}.
Otherwise---when the distance is not a metric or that
the number of dimensions is too large---we use 
bounding techniques such as the
Generic multimedia object indexing (GEMINI)~\cite{faloutsos1996smd}.
We quickly discard (most) false positives
by computing a lower bound. 

\begin{figure}[hb]
\centering\includegraphics[width=0.5\columnwidth]{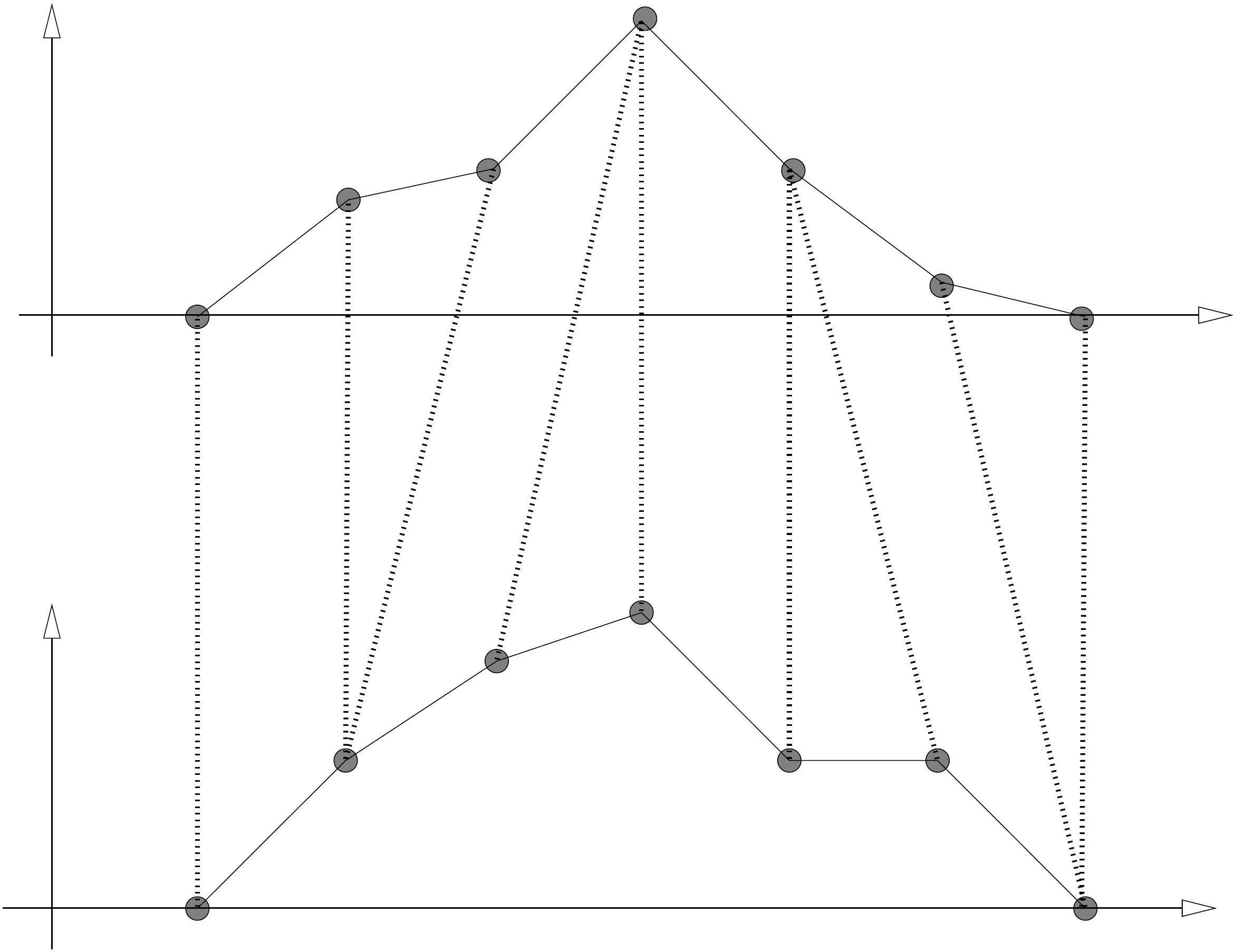}
\caption{\label{fig:example}Dynamic Time Warping example}
\end{figure}


Ratanamahatana and Keogh~\cite{ratanamahatana2005tmd} argue that their lower bound
(LB\_Keogh) cannot be improved upon. To make their point, they report that LB\_Keogh
allows them to prune out over 90\% of all
DTW computations on several data sets. 

We are able to improve upon LB\_Keogh as follows.
The first step of our two-pass approach is LB\_Keogh itself.
If this first lower bound is sufficient to discard the candidate, then
the computation terminates and the next candidate is considered.
Otherwise, we process the time series a second time to increase the
lower bound. If this second lower bound is large enough,
the candidate is pruned, otherwise we compute the full DTW\@. 
We show experimentally
that the two-pass approach can be several times faster.

The paper is organized as follows. In Section~\ref{sec:dtw}, we 
 define the DTW in a generic manner
as the minimization of the $l_p$ norm ($\text{DTW}_p$).
In Section~\ref{sec:properties}, we present various secondary mathematical results.
Among other things, we show that if $x$ and $y$ are separated by a constant ($x\geq c \geq y$ or $x\leq c \leq y$)
then the $\text{DTW}_1$ is the $l_1$ norm (see Proposition~\ref{prof:band1}).
In Section~\ref{sec:ti}, we derive a tight triangle inequality for the DTW.
In Section~\ref{sec:whichmeasure}, we show that $\text{DTW}_{1}$ is good choice
for time-series classification.
In Section~\ref{sec:definingUL}, we compute generic lower bounds on the DTW and their
approximation errors using warping envelopes.
In Section~\ref{sec:envelopes}, we show how to compute the warping
envelopes quickly and derive some of their mathematical properties.
The next two sections introduce LB\_Keogh and LB\_Improved respectively,
whereas the last section presents an experimental comparison.

\section{Conventions}

Time series are arrays of values measured at certain times.
For simplicity, we assume a regular
 sampling rate so that time series are generic arrays of floating-point values.
A time series $x$ has length $\vert x\vert $.
Time series have length $n$ and
are indexed from 1 to $n$.
The $l_p$ norm of  $x$ 
is $\Vert x \Vert_p= (\sum_i \vert x_i\vert ^p)^{1/p}$ for any integer
$0<p<\infty$ and $\Vert x \Vert_\infty= \max_i\vert  x_i\vert$. 
The $l_p$ distance between $x$ and $y$ is  $\Vert x - y\Vert_p$
and it satisfies
the triangle inequality $\Vert x - z\Vert_p\leq \Vert x - y\Vert_p+ \Vert y - z\Vert_p$ for $1\leq p\leq \infty$.
Other conventions are summarized in Table~\ref{tab:commonnot}.

\begin{table}[bth]
\caption{\label{tab:commonnot}Frequently used conventions}
\begin{tabular}{ccc}\hline
$\vert x \vert$ or $n$ & length  \\
$\Vert x \Vert_p$ & $l_p$ norm \\
$\text{DTW}_{p}$ & monotonic DTW \\
$\text{NDTW}_{p}$ & non-monotonic DTW \\
$w$ & DTW locality constraint\\ 
$N(\mu,\sigma)$ & normal distribution \\
$U(x), L(x)$ & warping envelope (see Section~\ref{sec:definingUL})\\
$H(x,y)$ & projection of $x$ on $y$ (see Equation~\ref{eqn:hxy})\\
\hline
\end{tabular}
\end{table}
\section{Related Works}

Beside DTW, several similarity metrics have been
proposed including the
directed and general  Hausdorff distance, Pearson's correlation,
 nonlinear elastic matching distance~\cite{veltkamp2001sms}, 
 Edit distance with Real Penalty (ERP)~\cite{chen:mln},
Needleman-Wunsch similarity~\cite{needleman1970gma},
Smith-Waterman similarity~\cite{smith1981icm},
and SimilB~\cite{1340465}.

Dimensionality reduction, such as piecewise constant~\cite{keogh2005eid} or piecewise 
linear~\cite{xiao2004nst,shou2005,dong2006lsd} segmentation,
 can speed up retrieval under DTW distance.
 These techniques can be coupled with other optimization techniques~\cite{Sakurai2005}.

The performance of lower bounds can be further improved if one 
uses  early abandoning~\cite{1106385} to cancel the computation of the
lower bound as soon as the error is too large. 
Boundary-based lower-bound functions sometimes outperform LB\_Keogh~\cite{zhou2007bbl}.
Zhu and Shasha showed that computing a warping envelope 
 prior to applying
dimensionality reduction results in  a tighter lower bound~\cite{zhu2003wie}.
We can also quantize~\cite{1166502} or cluster~\cite{keogh2006lse} the time series.

\section{Dynamic Time Warping}
\label{sec:dtw}
A many-to-many matching between the data points in time series $x$ and the data point
in time series $y$ matches every data point $x_i$ in $x$ with at least one data point $y_j$ in $y$,
and every data point in $y$ with at least a data point in $x$. The set of 
matches $(i,j)$ forms a \emph{warping path} $\Gamma$.
We define the DTW as the minimization of the $l_p$ norm of the differences $\{ x_i-y_j  \}_{(i,j)\in \Gamma}$
over all warping paths. 
A warping path is minimal if there is no subset $\Gamma'$ of $\Gamma$ 
forming a warping path: for simplicity we require all warping paths to be minimal.

In computing the DTW distance, we commonly require the warping to remain
local. For time series $x$ and $y$, we do not align values 
$x_i$ and $y_j$ if $\vert i- j \vert > w$ for some locality constraint $w\geq 0 $~\cite{sakoe1978dpa}.
When $w=0$, the DTW becomes the $l_p$ distance whereas when $w\geq n$, the DTW has no locality constraint.
The value of the DTW diminishes monotonically as $w$ increases.
 
 Other than locality, DTW can be monotonic: if we align value $x_i$
 with value $y_j$, then we cannot align value $x_{i+1}$ with a value
 appearing before $y_j$ ($y_{j'}$ for $j'<j$).

We note the DTW distance between $x$ and $y$ using the $l_p$ norm as $\text{DTW}_p(x,y)$
when it is monotonic and as $\text{NDTW}_p(x,y)$ when monotonicity is not required.

By dynamic programming, the monotonic DTW  requires $O(w n)$ time. A typical value of $w$ is $n/10$~\cite{ratanamahatana2005tmd} 
so that the DTW is in~$O(n^2)$.
To compute the DTW, we use the following recursive formula.
Given an array $x$, we write the suffix starting at position $i$, $x_{(i)}= x_i,x_{i+1},\ldots,x_n$.
The symbol $\oplus$ is the exclusive or.  Write $q_{i,j}= \text{DTW}_p(x_{(i)},y_{(j)})^p$ so that $\text{DTW}_p(x,y)=\sqrt[p]{q_{1,1}}$,
then
\begin{align*}
q_{i,j}= \begin{cases}0  & \text{if $\vert x_{(i)}\vert = \vert (y_{(j)}\vert =0$}\\
\infty  & \begin{matrix}\text{if $\vert x_{(i)}\vert = 0 \oplus \vert y_{(j)}\vert =0$}\\
\text{or  $\vert i- j \vert > w$ }\end{matrix}\\
 \begin{matrix}\vert x_i-y_j \vert^p +\\ \min (
 q_{i+1,j}, q_{i,j+1}, q_{i+1,j+1})\end{matrix} & \text{otherwise.}
\end{cases}
\end{align*}
For $p=\infty$, we rewrite
the preceding recursive formula with 
$q_{i,j}= \text{DTW}_\infty(x_{(i)},y_{(j)})$, and $q_{i,j}=\max (\vert x_i-y_j \vert, \min (
 q_{i+1,j}, q_{i,j+1}, q_{i+1,j+1}))$ when $\vert x_{(i)}\vert \neq 0$,  $\vert y_{(j)}\vert \neq 0$, and $\vert i-j\vert \leq w$.

We can compute $\text{NDTW}_1$ 
without time constraint in $O(n \log n)$~\cite{1275562}: if the values of the
time series are already sorted, the computation is in $O(n)$ time.

We can express the solution of the DTW problem  
 as an alignment of the two initial time series (such as $x=0,1,1,0$ and $y=0,1,0,0$) where some of the 
values are repeated (such as $x'=0,1,1,0,\textbf{0}$ and $y'=0,1,\textbf{1},0,0$).
If we allow non-monotonicity (NDTW), then values can also be inverted.

The non-monotonic DTW is no larger than the monotonic DTW which is itself no 
larger than the $l_p$ norm: $\text{NDTW}_p(x,y) \leq \text{DTW}_p(x,y) \leq \Vert x-y\Vert_p$ for all $0<p\leq \infty$.


\section{Some Properties of Dynamic Time Warping}
\label{sec:properties}

The DTW distance can be counterintuitive. 
As an example, if $x,y,z$ are three time series such that
 $x \leq y \leq z$ pointwise,
then it does not follow that $\text{DTW}_p(x,z) \geq \text{DTW}_p(z,y)$.
Indeed, choose $x=7,0,1,0$, $y=7,0,5,0$, and $z=7,7,7,0$,
then $\text{DTW}_\infty(z,y)=5$ and  $\text{DTW}_\infty(z,x)=1$.
Hence, we review some of the mathematical properties
of the DTW\@. 

The warping path aligns $x_i$ from time series $x$ and $y_j$ from time series $y$ if
$(i,j) \in \Gamma$. The next proposition is a general constraint on warping paths.

\begin{proposition}\label{prop:warpingpathprop} Consider any two time series $x$ and $y$. 
For any minimal warping path,  if $x_i$ is aligned with $y_j$, then either $x_i$ is aligned only with $y_j$
or $y_j$ is aligned only with $x_i$.
\end{proposition}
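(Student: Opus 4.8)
The plan is to argue by contradiction, exploiting the minimality hypothesis directly. Suppose $\Gamma$ is a minimal warping path containing the match $(i,j)$, and suppose the conclusion fails: $x_i$ is aligned with some $y_{j'}$ with $j'\ne j$ \emph{and} simultaneously $y_j$ is aligned with some $x_{i'}$ with $i'\ne i$. I would then consider the smaller set $\Gamma' = \Gamma \setminus \{(i,j)\}$ and show that it is still a warping path, which contradicts the minimality of $\Gamma$.

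The core of the argument is the verification that $\Gamma'$ still meets the covering requirement, and this is short. For any index $k\ne i$, the point $x_k$ retains all of its matches in $\Gamma'$, so it remains matched to at least one point of $y$; and $x_i$ is still matched to $y_{j'}$ by hypothesis. Symmetrically, every $y_l$ with $l\ne j$ is unaffected, and $y_j$ remains matched to $x_{i'}$. Hence every data point of $x$ and of $y$ is still aligned with at least one data point of the other series, so $\Gamma'$ satisfies the definition of a matching/warping path.

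I also need $\Gamma'$ to respect the structural constraints in force --- monotonicity for $\text{DTW}_p$ and the locality window $\vert i-j\vert \le w$. Both constraints are of a ``for all (pairs of) matches'' form: they only forbid certain matches or certain pairs of matches. Deleting the single match $(i,j)$ cannot introduce a forbidden match or a forbidden pair, so these properties are inherited by any subset of $\Gamma$, in particular by $\Gamma'$. Thus $\Gamma'$ is a genuine (monotonic, local) warping path strictly contained in $\Gamma$, contradicting minimality; therefore at least one of the two ``aligned only with'' alternatives must hold.

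I do not expect a real obstacle here. The only point requiring a touch of care is ensuring that the two exceptional alignments $(i,j')$ and $(i',j)$ genuinely survive the deletion of $(i,j)$ --- which is precisely why the hypotheses $j'\ne j$ and $i'\ne i$ are invoked --- and noting, for the monotonic case, that removing an edge never breaks monotonicity. Beyond that the proof is a one-line contradiction.
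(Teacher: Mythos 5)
Your proof is correct and follows essentially the same route as the paper: negate the conclusion, delete a single now-redundant match from $\Gamma$, and contradict minimality. The only difference is which match gets deleted --- you remove $(i,j)$ itself, both of whose endpoints retain other partners, whereas the paper removes $(k,j)$; your choice is actually the safer one, since deleting $(k,j)$ could leave $x_k$ unmatched if $y_j$ were its only partner.
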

\begin{proof} Suppose that the result is not true. Then there is $x_k, x_i$ and $y_l,y_j$ such
 that $x_k$ and $x_i$ are aligned with $y_j$, and $y_l$ and $y_j$ are aligned with $x_i$. We can delete $(k,j)$ from the warping path and  still have a warping path. A contradiction.
\end{proof}

Hence, we have that the cardinality of the warping path  is no larger than $2n$.  Indeed,
each match $(i,j)\in \Gamma$ must be such that $i$ or $j$ only occurs in this match
by the above proposition.

The next lemma shows that the DTW becomes the $l_p$ distance when either $x$ or $y$ is constant.

\begin{lemma}\label{lemma:yconstant}
For any $0 < p \leq \infty$, if $y=c$ is a constant, then $\text{NDTW}_p(x,y) = \text{DTW}_p(x,y)= \Vert x-y\Vert_p$.
\end{lemma}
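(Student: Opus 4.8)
The plan is to show both inequalities $\text{DTW}_p(x,y) \leq \Vert x-y\Vert_p$ and $\Vert x-y\Vert_p \leq \text{NDTW}_p(x,y)$, which together with the established chain $\text{NDTW}_p(x,y) \leq \text{DTW}_p(x,y)$ forces all three quantities to coincide. The upper bound $\text{DTW}_p(x,y) \leq \Vert x-y\Vert_p$ is already stated in the excerpt (it holds for every pair of time series), so the only real work is the lower bound $\Vert x-y\Vert_p \leq \text{NDTW}_p(x,y)$, and for this the hypothesis $y=c$ (constant) must be used in an essential way.

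First I would fix an arbitrary (minimal) warping path $\Gamma$ realizing $\text{NDTW}_p(x,y)$, and observe that since every $y_j$ equals the same constant $c$, each difference appearing along the path has the form $\vert x_i - c\vert$ for some index $i$. By the definition of a warping path, every index $i \in \{1,\dots,n\}$ occurs in at least one pair $(i,j) \in \Gamma$, so the multiset $\{\vert x_i - y_j\vert\}_{(i,j)\in\Gamma}$ contains, for each $i$, at least one copy of $\vert x_i - c\vert$. Hence, for finite $p$,
\begin{align*}
\text{NDTW}_p(x,y)^p = \sum_{(i,j)\in\Gamma} \vert x_i - y_j\vert^p \geq \sum_{i=1}^n \vert x_i - c\vert^p = \Vert x - y\Vert_p^p,
\end{align*}
and taking $p$-th roots gives $\text{NDTW}_p(x,y) \geq \Vert x-y\Vert_p$. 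For $p=\infty$ the same idea works with $\max$ in place of the sum: $\text{NDTW}_\infty(x,y) = \max_{(i,j)\in\Gamma}\vert x_i-y_j\vert \geq \max_i \vert x_i - c\vert = \Vert x-y\Vert_\infty$.

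Combining this with $\text{NDTW}_p(x,y) \leq \text{DTW}_p(x,y) \leq \Vert x-y\Vert_p$ (the monotonicity chain recalled at the end of Section~\ref{sec:dtw}) yields $\Vert x-y\Vert_p \leq \text{NDTW}_p(x,y) \leq \text{DTW}_p(x,y) \leq \Vert x-y\Vert_p$, so equality holds throughout, which is exactly the claim. The case $x=c$ constant is symmetric, since both DTW and NDTW are symmetric in their arguments (the warping-path formulation is symmetric, and reversing a path turns an $x$-to-$y$ alignment into a $y$-to-$x$ alignment of the same cost).

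The only subtlety I anticipate is making the counting step fully rigorous for the non-monotonic case: one must be sure that ``warping path'' genuinely forces every index of $x$ to appear, which follows directly from the many-to-many matching definition given in Section~\ref{sec:dtw} (each data point of $x$ is matched with at least one data point of $y$). Minimality of the path is not needed here — it only can remove redundant pairs, which cannot decrease the sum below $\Vert x-y\Vert_p^p$ since each $\vert x_i-c\vert$ term is still represented at least once. So the argument is short, and the ``hard part'' is essentially just stating the index-coverage observation cleanly.
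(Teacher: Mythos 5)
Your proof is correct, and in fact the paper states Lemma~\ref{lemma:yconstant} without any proof, so your argument supplies exactly the missing (and clearly intended) reasoning: since every warping path must match each $x_i$ to some $y_j=c$, the path cost is at least $\sum_i \vert x_i-c\vert^p$ (or $\max_i\vert x_i-c\vert$ for $p=\infty$), and combining this with the stated chain $\text{NDTW}_p\leq\text{DTW}_p\leq\Vert x-y\Vert_p$ forces equality. Two small remarks: the case $x=c$ you treat at the end is not part of the statement (though harmless), and it is worth noting that your argument uses no triangle inequality, so it is valid for the full range $0<p\leq\infty$ claimed by the lemma, including $0<p<1$, and for any locality constraint $w$.
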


When $p=\infty$, a stronger result is true: if $y=x+c$ for some constant $c$,
then $\text{NDTW}_\infty(x,y) = \text{DTW}_\infty(x,y)= \Vert x-y\Vert_\infty$. Indeed, 
 $\text{NDTW}_\infty(x,y)\geq \vert \max(y)-\max(x) \vert = c = \Vert x-y\Vert_\infty\geq \Vert x-y\Vert_\infty$ which
 shows the result.
This same result is not true for $p<\infty$: for $x=0,1,2$ and $y=1,2,3$, we have
$\Vert x-y\Vert_p=\sqrt[p]{3}$ whereas $\text{DTW}_p(x,y)=\sqrt[p]{2}$.
 However, the DTW is translation invariant:  
$\textrm{DTW}_p(x,z)=\textrm{DTW}_p(x+b,z+b)$
and $\textrm{NDTW}_p(x,z)=\textrm{NDTW}_p(x+b,z+b)$ for any scalar $b$ and $0<p\leq \infty$.

The $\text{DTW}_1$ has the property that if 
the time series are value-separated, then the DTW is the $l_1$ norm
as the next proposition shows.

\begin{proposition}\label{prof:band1}If $x$ and $y$
are such that  either $x\geq c \geq y$ or $x\leq c \leq y$ for some constant $c$,
then $\text{DTW}_1(x,y)=\text{NDTW}_1(x,y)= \Vert x-y \Vert_1$.
\end{proposition}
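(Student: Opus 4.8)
The plan is to reduce the general ``value-separated'' case to the constant case already handled by Lemma~\ref{lemma:yconstant}, using translation invariance and the $p=1$ structure. First I would assume without loss of generality that $x \geq c \geq y$ (the other case being symmetric, or obtainable by negating both series, which preserves $\text{DTW}_1$ and $\|\cdot\|_1$). By translation invariance of the DTW (and of the $l_1$ distance), I may subtract $c$ from both series, so it suffices to treat the case $x \geq 0 \geq y$. Now the key observation for $p=1$ is that whenever $x_i \geq 0 \geq y_j$, the cost $|x_i - y_j| = x_i - y_j = |x_i| + |y_j|$ splits additively.

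The main step is to show $\text{NDTW}_1(x,y) \geq \|x-y\|_1$; the reverse inequality $\text{NDTW}_1(x,y) \leq \text{DTW}_1(x,y) \leq \|x-y\|_1$ is already stated in the excerpt. Take any warping path $\Gamma$ between $x$ and $y$. Its cost is $\sum_{(i,j)\in\Gamma} |x_i - y_j| = \sum_{(i,j)\in\Gamma} (x_i - y_j) = \sum_{(i,j)\in\Gamma} x_i - \sum_{(i,j)\in\Gamma} y_j$. Since every index $i$ of $x$ appears in at least one match and every index $j$ of $y$ appears in at least one match, and since all $x_i \geq 0$ and all $y_j \leq 0$, we get $\sum_{(i,j)\in\Gamma} x_i \geq \sum_i x_i = \sum_i |x_i|$ and $-\sum_{(i,j)\in\Gamma} y_j \geq -\sum_j y_j = \sum_j |y_j|$. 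Hence the cost of $\Gamma$ is at least $\sum_i |x_i| + \sum_j |y_j| = \sum_i |x_i - y_i| = \|x-y\|_1$. Taking the minimum over all warping paths gives $\text{NDTW}_1(x,y) \geq \|x-y\|_1$, and chaining with $\text{NDTW}_1 \leq \text{DTW}_1 \leq \|x-y\|_1$ forces equality throughout.

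The only mild subtlety — the ``hard part,'' such as it is — is the bookkeeping that every coordinate of each series is covered by $\Gamma$ at least once (so the sums over matched pairs dominate the sums over coordinates), which is exactly the defining property of a warping path and requires no extra work; and the sign manipulation $|x_i - y_j| = x_i - y_j$, which is valid precisely because of the separation hypothesis $x \geq 0 \geq y$ after translation. One should also note that a match $(i,j)$ contributes $x_i$ once and $y_j$ once even if $i$ or $j$ is repeated across several matches, which only makes the matched-pair sums larger, so the inequalities go the right way. No locality-constraint assumption is needed: the argument is a lower bound over \emph{all} warping paths, so it applies verbatim to the windowed DTW as well.
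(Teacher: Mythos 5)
Your proof is correct and follows essentially the same route as the paper: the paper writes the optimal warping as aligned (extended) series $x',y'$ and splits each term as $\vert x'_i-y'_i\vert=\vert x'_i-c\vert+\vert c-y'_i\vert$, then uses the fact that every original coordinate appears at least once with nonnegative contribution, which is exactly your coverage argument after translating $c$ to $0$. The only cosmetic differences are your explicit use of translation invariance and warping-path summation instead of the aligned-series notation.
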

\begin{proof}Assume $x\geq c \geq y$, there exists $x', y'$ such that
$x'\geq c \geq y'$ and $\text{NDTW}_1(x,y)= \Vert x'-y' \Vert_1 =\sum_i \vert x'_i - y'_i\vert
= \sum_i \vert x'_i - c\vert  + \vert c- y'_i\vert = \Vert x'-c \Vert_1 + \Vert c-y' \Vert_1 \geq
\Vert x-c \Vert_1 + \Vert c-y \Vert_1 =  \Vert x-y \Vert_1$. Since 
we also have  $\text{NDTW}_1(x,y) \leq \text{DTW}_1(x,y) \leq \Vert x-y\Vert_1$, the equality follows.
\end{proof}

Proposition~\ref{prof:band1} does not hold for $p>1$: $\text{DTW}_2((0,0,1,0), (2,3,2,2))=\sqrt{17}$
whereas $\Vert(0,0,1,0) - (2,3,2,2)\Vert_2=\sqrt{18}$.

In classical analysis, we have that $n^{1/p-1/q} \Vert x\Vert_q \geq \Vert x \Vert_p$~\cite{folland84} for  $1\leq p < q \leq \infty$. A similar results is true for the DTW and it allows us to conclude that $\text{DTW}_p(x,y)$ and $\text{NDTW}_p(x,y)$ decrease monotonically as $p$ increases.

\begin{proposition} For $1\leq p < q \leq \infty$, we have that 
$(2n)^{1/p-1/q}  \text{DTW}_q(x,y) \geq \text{DTW}_p(x,y)$ 
where $\vert x\vert = \vert y \vert =n$. The result also holds for the non-monotonic DTW.
\end{proposition}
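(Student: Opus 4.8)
The plan is to exploit the variational definition of the DTW together with the classical finite-dimensional inequality $\Vert v\Vert_p \le m^{1/p-1/q}\Vert v\Vert_q$, valid for every vector $v$ of length $m$ and every $1\le p<q\le\infty$ (this is the companion of the inequality $n^{1/p-1/q}\Vert x\Vert_q\ge\Vert x\Vert_p$ quoted just before the proposition).

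First I would fix a warping path $\Gamma^\star$ realizing $\text{DTW}_q(x,y)$; since for time series of length $n$ there are only finitely many minimal warping paths, this minimum is attained. Write $d=(x_i-y_j)_{(i,j)\in\Gamma^\star}$ for the associated vector of aligned differences and let $m=\vert\Gamma^\star\vert$, so that $\text{DTW}_q(x,y)=\Vert d\Vert_q$. By Proposition~\ref{prop:warpingpathprop}, or rather the remark following it, a minimal warping path has cardinality at most $2n$, hence $m\le 2n$.

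Next, since $\Gamma^\star$ is itself a warping path — and, if a locality constraint $w$ is imposed, it trivially satisfies that same constraint — it is a feasible (possibly non-optimal) path in the minimization defining $\text{DTW}_p(x,y)$, so $\text{DTW}_p(x,y)\le\Vert d\Vert_p$. Applying the classical norm inequality to $d$ gives $\Vert d\Vert_p\le m^{1/p-1/q}\Vert d\Vert_q$, and because $1/p-1/q>0$ the map $m\mapsto m^{1/p-1/q}$ is increasing, so $m^{1/p-1/q}\le (2n)^{1/p-1/q}$. Chaining the three estimates yields $\text{DTW}_p(x,y)\le\Vert d\Vert_p\le m^{1/p-1/q}\Vert d\Vert_q\le (2n)^{1/p-1/q}\text{DTW}_q(x,y)$, which is the claim. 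The monotonicity statement ($\text{DTW}_p$ decreasing in $p$) then follows by letting $n\to\infty$ is not needed — it is immediate from the inequality together with $\Vert d\Vert_q\le\Vert d\Vert_p$.

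The non-monotonic case is identical: Proposition~\ref{prop:warpingpathprop} and the bound $m\le 2n$ were established for arbitrary minimal warping paths without using monotonicity, and an optimal $\text{NDTW}_q$ path is again feasible for $\text{NDTW}_p$. The only points that merit a word of care are the endpoint $q=\infty$, where $1/q=0$ and the inequality reads $\Vert d\Vert_p\le m^{1/p}\Vert d\Vert_\infty$ (still elementary), and the bookkeeping of the length in the finite-dimensional inequality: it must be the number of matches $m$ along the chosen path, not $n$. That subtlety is exactly what the uniform cardinality bound $m\le 2n$ is there to absorb, so I expect no real obstacle.
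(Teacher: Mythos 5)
Your proposal is correct and follows essentially the same route as the paper: take an optimal warping path for $\text{DTW}_q$, bound its cardinality by $2n$ via Proposition~\ref{prop:warpingpathprop}, apply the classical $l_p$--$l_q$ inequality to the vector of aligned differences, and use feasibility of that path in the $\text{DTW}_p$ minimization. Your explicit handling of the $q=\infty$ endpoint and of the locality constraint are minor refinements of details the paper leaves implicit.
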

\begin{proof}The argument is the same for the monotonic or non-monotonic DTW\@.
 Given $x,y$ consider the two aligned (and extended) time series $x', y'$
 such that  $\text{DTW}_q(x,y)=\Vert x'-y' \Vert_q$. As a consequence of  Proposition~\ref{prop:warpingpathprop},
 we have $\vert x'\vert =\vert y'\vert \leq 2n$. From classical analysis, we
 have $\vert x'\vert ^{1/p-1/q} \Vert x'-y'\Vert_q \geq \Vert  x'-y' \Vert_p$,
 hence $\vert 2n\vert ^{1/p-1/q} \Vert x'-y'\Vert_q \geq \Vert  x'-y' \Vert_p$  
 or $\vert 2n\vert ^{1/p-1/q} \text{DTW}_q(x,y) \geq \Vert  x'-y' \Vert_p$.
Since $x',y'$ represent a valid warping path of $x,y$,
then  $\Vert x'-y' \Vert_p \geq \text{DTW}_p(x,y)$ which concludes the proof. 
\end{proof}

\section{The Triangle Inequality}
\label{sec:ti}
The DTW is commonly used as a similarity measure:  $x$ and $y$ are similar
if $\text{DTW}_p(x,y)$ is small. Similarity measures
often define equivalence relations:  $A\sim A$ for all $A$ (reflexivity), $A\sim B \Rightarrow B \sim C$ (symmetry) and $A\sim B \land B \sim C \Rightarrow A \sim C$ (transitivity).

The DTW is  reflexive and symmetric, but it is not transitive. 
  Indeed, consider the following time series:
  \begin{align*}
  X&= \underbrace{0,0,\ldots,0,0}_{2m+1\text{ times}},\\
  Y& =\underbrace{0,0,\ldots,0,0}_{m\text{ times}}, \epsilon, \underbrace{0,0,\ldots,0,0}_{m\text{ times}},\\
  Z& =0, \underbrace{\epsilon,\epsilon,\ldots,\epsilon,\epsilon}_{2m-1\text{ times}}, 0.  
  \end{align*}
  We have that $\text{NDTW}_p(X,Y)=\text{DTW}_p(X,Y)=\vert \epsilon\vert $, $\text{NDTW}_p(Y,Z)=\text{DTW}_p(Y,Z)=0$,  $\text{NDTW}_p(X,Z)=\text{DTW}_p(X,Z)=\sqrt[p]{(2m-1) }\vert \epsilon\vert$ for $1\leq p<\infty$ and $w=m-1$.
  Hence, for $\epsilon$ small and $n\gg 1/\epsilon$, we have that  $X\sim Y$ and $Y\sim Z$, but $X \not\sim Z$. This example proves
  the following lemma.
 
 \begin{lemma}\label{lemma:ti}For $1\leq p < \infty$ and $w>0$, neither $\text{DTW}_p$ nor $\text{NDTW}_p$  satisfies
 a triangle inequality of the form $d(x,y)+d(y,z)\geq c d(x,z)$ where $c$ is independent of the length of the time series
 and of the locality constraint. 
 \end{lemma}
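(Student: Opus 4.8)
\noindent\emph{Proof proposal.} The plan is to rely entirely on the three time series $X$, $Y$, $Z$ exhibited just above the lemma, regarded as a family indexed by the parameter $m$ together with the locality constraint $w=m-1$. Everything reduces to pinning down the three pairwise DTW values and then reading off the constraint they impose on a putative constant $c$.

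First I would establish $\text{NDTW}_p(X,Y)=\text{DTW}_p(X,Y)=\vert\epsilon\vert$ and $\text{NDTW}_p(X,Z)=\text{DTW}_p(X,Z)=(2m-1)^{1/p}\vert\epsilon\vert$. Both are immediate from Lemma~\ref{lemma:yconstant}: since $X$ is a constant time series, the DTW collapses to the $l_p$ distance (for every $w\geq 0$), and $\Vert X-Y\Vert_p=\vert\epsilon\vert$ has a single nonzero coordinate while $\Vert X-Z\Vert_p=\bigl((2m-1)\vert\epsilon\vert^p\bigr)^{1/p}$ has $2m-1$ nonzero coordinates.

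Next I would show $\text{NDTW}_p(Y,Z)=\text{DTW}_p(Y,Z)=0$ by exhibiting a monotone warping path of zero cost that respects the band $\vert i-j\vert\leq m-1$: align the $m$ leading zeros of $Y$ with the single leading zero of $Z$, align the central $\epsilon$ of $Y$ with all $2m-1$ interior $\epsilon$'s of $Z$, and align the $m$ trailing zeros of $Y$ with the single trailing zero of $Z$. Every matched pair carries equal entries, so the cost is $0$; since the DTW is nonnegative the value is exactly $0$.

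Finally I would substitute into the candidate inequality: for $d\in\{\text{DTW}_p,\text{NDTW}_p\}$ we get $d(X,Y)+d(Y,Z)=\vert\epsilon\vert$ and $c\,d(X,Z)=c\,(2m-1)^{1/p}\vert\epsilon\vert$, so $d(X,Y)+d(Y,Z)\geq c\,d(X,Z)$ forces $c\leq(2m-1)^{-1/p}$. But $n=2m+1$ and $w=m-1>0$ may be taken arbitrarily large, and $(2m-1)^{-1/p}\to 0$ as $m\to\infty$; hence no positive constant $c$ independent of the length and of the locality constraint can work. I expect the only non-routine point to be the index bookkeeping in the third step: the leading and trailing blocks of the $(Y,Z)$ path are exactly tight against $w=m-1$, so admissibility must be checked with some care, whereas the remaining steps are a direct appeal to Lemma~\ref{lemma:yconstant} together with elementary arithmetic. (The restriction $p<\infty$ is essential, since for $p=\infty$ one has $\text{DTW}_\infty(X,Z)=\vert\epsilon\vert$ as well and the ratio does not blow up.)
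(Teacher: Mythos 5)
Your proposal is correct and follows the paper's own argument: the paper proves the lemma precisely via the family $X,Y,Z$ with $w=m-1$, asserting the three DTW values and letting $m\to\infty$. You merely fill in the details the paper leaves implicit (the appeal to the constant-series lemma for $\text{DTW}_p(X,Y)$ and $\text{DTW}_p(X,Z)$, and the explicit zero-cost band-feasible warping path for $(Y,Z)$), and these verifications are accurate.
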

 
 This theoretical result is somewhat at odd with practical experience.
 Casacuberta et al. found no triangle inequality violation in about 15~million
  triplets of voice recordings~\cite{casacuberta1987mpd}. 
To determine whether we could expect violations of the triangle inequality in
practice, we ran the following experiment. We used 3~types of 100-sample time series: 
 white-noise times series 
defined by $x_i = N(0,1)$ where $N$ is the normal distribution,  random-walk 
time series defined by $x_i = x_{i-1}+N(0,1)$ and $x_1=0$, and the 
Cylinder-Bell-Funnel
time series proposed by Saito~\cite{921732}. For each type, we generated
100~000~triples of time series $x,y,z$ and we computed the histogram of the
function \begin{align*}C(x,y,z)=\frac{\text{DTW}_p(x,z)}{\text{DTW}_p(x,y) + \text{DTW}_p(y,z)} \end{align*} for $p=1$ and $p=2$.
The DTW is computed without time constraints. Over the white-noise and Cylinder-Bell-Funnel time
series, we failed
to find a single violation of the triangle inequality: a triple $x,y,z$ for
which $C(x,y,z) > 1$. 
However, for the random-walk time series,  we found that 20\% and 15\% of the triples violated the triangle inequality
for $\text{DTW}_1$ and $\text{DTW}_2$.



 
 

The DTW satisfies a weak triangle inequality as the next theorem shows.

\begin{theorem}
Given any 3~same-length time series $x,y,z$ and $1\leq p\leq \infty$, we have 
\begin{align*}\text{DTW}_p(x,y)+\text{DTW}_p(y,z) \geq \frac{\text{DTW}_p(x,z)}{\min(2w+1,n)^{1/p}}\end{align*}
where $w$ is the locality constraint. The result also holds for the non-monotonic DTW.
\end{theorem}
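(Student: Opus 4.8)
The plan is to produce an explicit warping path between $x$ and $z$ by \emph{composing} optimal warping paths for the two given pairs, and then to bound the cost of this composite path via Minkowski's inequality. Let $\Gamma_{xy}$ be a warping path realizing $\text{DTW}_p(x,y)$ and $\Gamma_{yz}$ one realizing $\text{DTW}_p(y,z)$ (monotone in the $\text{DTW}$ case, arbitrary in the $\text{NDTW}$ case), and set
\begin{align*}
\Gamma=\bigl\{(i,k): \text{there is a }j\text{ with }(i,j)\in\Gamma_{xy}\text{ and }(j,k)\in\Gamma_{yz}\bigr\}.
\end{align*}
First I would verify that $\Gamma$ (or a minimal sub-path of it, which only lowers the cost) is a genuine warping path between $x$ and $z$: every $x_i$ meets some $y_j$ in $\Gamma_{xy}$, which in turn meets some $z_k$ in $\Gamma_{yz}$, so $x_i$ is aligned, and symmetrically every $z_k$ is aligned; monotonicity, when required, is preserved under composition. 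For each $(i,k)\in\Gamma$ I fix a witness $j=j(i,k)$ with $(i,j)\in\Gamma_{xy}$ and $(j,k)\in\Gamma_{yz}$.

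The cost estimate is the heart of the matter. Assume $1\le p<\infty$. Since $\Gamma$ is a warping path, $\text{DTW}_p(x,z)\le\bigl(\sum_{(i,k)\in\Gamma}|x_i-z_k|^p\bigr)^{1/p}$; combining the scalar triangle inequality $|x_i-z_k|\le|x_i-y_{j(i,k)}|+|y_{j(i,k)}-z_k|$ with Minkowski's inequality over the index set $\Gamma$ gives
\begin{align*}
\text{DTW}_p(x,z)\le\Bigl(\sum_{(i,k)\in\Gamma}|x_i-y_{j(i,k)}|^p\Bigr)^{1/p}+\Bigl(\sum_{(i,k)\in\Gamma}|y_{j(i,k)}-z_k|^p\Bigr)^{1/p}.
\end{align*}
Now re-index the first sum by $\Gamma_{xy}$: a fixed pair $(i,j)\in\Gamma_{xy}$ contributes $|x_i-y_j|^p$ once for each $k$ with $j(i,k)=j$, hence at most as often as $y_j$ is aligned in $\Gamma_{yz}$, and by the locality constraint $y_j$ can only meet the indices $k$ with $|j-k|\le w$, of which there are at most $\min(2w+1,n)$. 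So the first sum is at most $\min(2w+1,n)\,\text{DTW}_p(x,y)^p$, and symmetrically the second is at most $\min(2w+1,n)\,\text{DTW}_p(y,z)^p$; taking $p$-th roots and adding gives the statement. For $p=\infty$ the same composition, with maxima in place of sums, gives the \emph{exact} triangle inequality $\text{DTW}_\infty(x,z)\le\text{DTW}_\infty(x,y)+\text{DTW}_\infty(y,z)$, consistent with $\min(2w+1,n)^{1/p}\to1$; the non-monotone case is word-for-word the same.

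I expect two points to need care. The real obstacle is the locality bookkeeping: the composite path $\Gamma$ only inherits a band of width $2w$ a priori ($|i-k|\le|i-j|+|j-k|\le 2w$), so one must justify comparing it against the $w$-banded $\text{DTW}_p(x,z)$ on the left — this is automatic when $w\ge n-1$ (and then $\min(2w+1,n)=n$), and otherwise one either argues that the path can be kept inside the $w$-band without increasing its cost, or observes that the band width enters the final estimate only through the multiplicity count $\min(2w+1,n)$. The second, purely quantitative, point is that one must invoke Minkowski's inequality rather than the crude convexity bound $(a+b)^p\le 2^{p-1}(a^p+b^p)$, which would cost a spurious factor $2^{1-1/p}$ and so miss the stated constant. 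Everything else is routine.
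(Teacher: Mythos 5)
Your argument is, in substance, the paper's own proof: compose the optimal warping paths through $y$ (the paper keeps the triples $(i,j,k)\in\Gamma''$ and builds extended series $x'',y'',z''$, while you project to pairs $(i,k)$ with a chosen witness $j(i,k)$ --- the same thing), bound the multiplicity of each match $(i,j)$ or $(j,k)$ by $\min(2w+1,n)$ using the locality constraint, and finish with Minkowski's inequality; your counting step coincides exactly with the paper's $\Vert x''-y''\Vert_p^p\le \min(2w+1,n)\,\text{DTW}_p(x,y)^p$ and its analogue for $(y,z)$. One small repair: monotonicity is not literally ``preserved under composition'' --- if $x_i$ and $x_{i'}$ share a witness $y_j$ which is matched to several $z_k$, the full projection contains crossing pairs --- but one can always extract a non-crossing covering subrelation, and discarding pairs only lowers the cost, so this is cosmetic (the paper glosses it too).

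The point you flag as ``the real obstacle'' is, however, a genuine gap, and neither of your two suggested outs closes it. The composite alignment only satisfies $|i-k|\le 2w$, so its cost upper-bounds the $2w$-banded DTW of $x$ and $z$, not the $w$-banded $\text{DTW}_p(x,z)$ appearing on the left of the inequality. Your first fix (push the path back into the $w$-band at no extra cost) is false in general, and in fact the inequality itself fails when all three distances carry the same active constraint: take $n=5$, $w=1$, $p=1$, $x=(0,1,0,0,0)$, $y=(0,0,1,0,0)$, $z=(0,0,0,1,0)$; the paths $\{(1,1),(1,2),(2,3),(3,4),(4,5),(5,5)\}$ and $\{(1,1),(2,2),(2,3),(3,4),(4,5),(5,5)\}$ are monotone, covering, within the band, and give $\text{DTW}_1(x,y)=\text{DTW}_1(y,z)=0$, whereas the impulses of $x$ and $z$ are two positions apart and cannot be aligned under $w=1$, so $\text{DTW}_1(x,z)=2$ and the claimed bound would read $0\ge 2/3$. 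Your second remark (that the band enters only through the multiplicity count) does not address the admissibility of the composite path at all. So the composition argument is complete only when that path is admissible for the left-hand term --- e.g.\ with no locality constraint ($w\ge n-1$, where $\min(2w+1,n)=n$), or with $\text{DTW}_p(x,z)$ computed under the widened band $2w$. For what it is worth, the paper's own proof makes exactly the same silent leap, asserting $\Vert x''-z''\Vert_p\ge \text{DTW}_p(x,z)$ without justification; you have reproduced the published argument and correctly located its weak point, but flagging the obstacle and then waving at it does not discharge it.
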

\begin{proof} Let $\Gamma$ and $\Gamma'$ be  minimal warping paths between $x$ and $y$
and between $y$ and $z$. Let $\Gamma''= \{(i,j,k) | (i,j) \in \Gamma \text{ and } (j,k) \in \Gamma'\}$.
Iterate through the tuples $ (i,j,k)$  in $\Gamma''$ and construct the same-length
time series $x'', y'', z''$ from $x_i$, $y_j$, and $z_k$. By the locality constraint any match $(i,j)\in \Gamma$
corresponds to at most $\min(2w+1,n)$~tuples of the form $(i,j,\cdot)\in \Gamma''$, and similarly for
any match $(j,k)\in \Gamma'$. 
Assume $1\leq p< \infty$. We
 have that $\Vert x''-y'' \Vert_p^p = \sum_{(i,j,k)\in \Gamma''} \vert x_i-y_j \vert^p \leq \min(2w+1,n)\text{DTW}_p(x,y)^p$
and $\Vert y''-z'' \Vert_p^p =\sum_{(i,j,k)\in \Gamma''} \vert y_j-z_k \vert^p \leq \min(2w+1,n)\text{DTW}_p(y,z)^p$.
By the triangle inequality in $l_p$, we have
\begin{align*}
 \min(2w+1,n)^{1/p}(\text{DTW}_p(x,y)+\text{DTW}_p(y,z)) & \geq \Vert x''-y'' \Vert_p+\Vert y''-z'' \Vert_p\\ 
                                                   &\geq  \Vert x''-z'' \Vert_p\geq \text{DTW}_p(x,z).
 \end{align*}
 For $p=\infty$, $\max_{(i,j,k)\in \Gamma''} \Vert x_i-y_j \Vert_p^p =\text{DTW}_{\infty}(x,y)^p$
 and
 $\max_{(i,j,k)\in \Gamma''} \vert y_j-z_k \vert^p = \text{DTW}_{\infty}(y,z)^p$, thus proving the result
 by the triangle inequality over $l_{\infty}$. 
The proof is the same for the non-monotonic DTW\@.\end{proof}

The constant $\min(2w+1,n)^{1/p}$ is tight. Consider the example with time series $X,Y,Z$ presented before
 Lemma~\ref{lemma:ti}. We have $\text{DTW}_p(X,Y) + \text{DTW}_p(Y,Z)=\vert \epsilon\vert $ 
 and $\text{DTW}_p(X,Z)= \sqrt[p]{(2w+1) }\vert \epsilon\vert $. Therefore, we have
 \begin{align*}\text{DTW}_p(X,Y)+\text{DTW}_p(Y,Z)= \frac{\text{DTW}_p(X,Z)}{\min(2w+1,n)^{1/p}}.\end{align*}

A consequence of this theorem is that $\text{DTW}_\infty$ satisfies the
traditional triangle inequality.
 
\begin{corollary}\label{coro:tilinfty}
The triangle inequality  $d(x,y)+ d(y,z)\geq d(x,z)$ holds
for $\text{DTW}_\infty$ and $\text{NDTW}_\infty$.
\end{corollary}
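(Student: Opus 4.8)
The plan is to obtain this directly from the weak triangle inequality of the preceding theorem by specializing to $p=\infty$. The one point to verify is that the constant $\min(2w+1,n)^{1/p}$ degenerates to $1$ in this case: since $\min(2w+1,n)$ is a fixed positive integer independent of $p$, the map $p\mapsto \min(2w+1,n)^{1/p}$ is non-increasing and tends to $1$ as $p\to\infty$, so with the usual convention $1/\infty=0$ the denominator is literally $1$ at $p=\infty$. Substituting $p=\infty$ into the theorem then yields $\text{DTW}_\infty(x,y)+\text{DTW}_\infty(y,z)\geq \text{DTW}_\infty(x,z)$, and the identical statement for $\text{NDTW}_\infty$ because the theorem was established simultaneously for the monotonic and non-monotonic versions.

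There is essentially no obstacle, since all the work was done in the theorem; I would simply invoke it with $p=\infty$. If one preferred not to treat that case as a black box, the alternative is to rerun the theorem's argument directly in the $\ell_\infty$ setting, where it is in fact simpler: take minimal warping paths $\Gamma$ between $x$ and $y$ and $\Gamma'$ between $y$ and $z$, form $\Gamma''=\{(i,j,k)\mid (i,j)\in\Gamma,\ (j,k)\in\Gamma'\}$ and the aligned series $x'',y'',z''$, and observe that $\max_{(i,j,k)\in\Gamma''}|x_i-y_j| = \text{DTW}_\infty(x,y)$ and $\max_{(i,j,k)\in\Gamma''}|y_j-z_k| = \text{DTW}_\infty(y,z)$, because enlarging the index set from $\Gamma$ (resp.\ $\Gamma'$) to $\Gamma''$ does not change a maximum — this is precisely why no inflation factor appears for $p=\infty$. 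Then the ordinary triangle inequality for $\Vert\cdot\Vert_\infty$ applied to $x'',y'',z''$, combined with $\Vert x''-z''\Vert_\infty\geq \text{DTW}_\infty(x,z)$, closes the argument. Either route is short.
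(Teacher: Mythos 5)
Your proposal is correct and matches the paper: the corollary is stated there as an immediate consequence of the weak triangle inequality theorem, whose constant $\min(2w+1,n)^{1/p}$ becomes $1$ at $p=\infty$, and your alternative direct argument is exactly the $p=\infty$ case already handled inside the theorem's proof. Nothing further is needed.
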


Hence the $\text{DTW}_\infty$ is a pseudometric: it is a metric over equivalence
classes defined by  $x\sim y$ if and only if $\text{DTW}_\infty(x,y)=0$.
When no locality constraint is enforced, $\text{DTW}_\infty$ is equivalent
to the discrete
Fréchet distance~\cite{eitermannila94}.

\section{Which is the Best Distance Measure?}
\label{sec:whichmeasure}
The DTW can be seen as the minimization of  the $l_p$ distance
under warping. Which $p$ should we choose?
Legrand et al. reported  best results
for chromosome classification using $\text{DTW}_1$~\cite{legrand2007ccu}
as opposed to using $\text{DTW}_2$. However, they did not quantify the
benefits of $\text{DTW}_1$.
Morse and Patel reported  similar results with both 
$\text{DTW}_1$ and $\text{DTW}_2$~\cite{morse2007eaa}.

While they do not consider the DTW, Aggarwal et al.~\cite{656414} 
argue
that out of the usual $l_p$ norms, only the $l_1$ norm, and to a lesser
extend the $l_2$ norm, express a qualitatively meaningful
distance when there are numerous dimensions. They even report on classification-accuracy experiments
where fractional $l_p$ distances such as $l_{0.1}$ and $l_{0.5}$ fare
better.  Fran\c{c}ois et al.~\cite{francois2007cfd} made the theoretical
result more precise showing that under uniformity assumptions, lesser
values of $p$ are always better. 

%







To compare $\text{DTW}_1$, $\text{DTW}_2$, $\text{DTW}_4$ and 
$\text{DTW}_{\infty}$, we considered four different synthetic time-series data sets:
Cylinder-Bell-Funnel~\cite{921732},
  Control Charts~\cite{pham1998ccp},
 Waveform~\cite{breiman1998car}, and
Wave+Noise~\cite{gonzalez2000tsc}. The time series in each data sets have
lengths 128, 60, 21, and 40.
 The Control Charts data set has
6~classes of time series whereas the other 3~data sets have 3~classes each.
For each data set, we generated various databases having
a different number of instances per class: between 1 and 9
inclusively for Cylinder-Bell-Funnel and Control Charts, and
between 1 and 99 for Waveform and Wave+Noise. For a given
data set and a given number of instances, 50~different databases
were generated. For each
database, we generated 500~new instances chosen from a random 
class and we found a nearest neighbor in the database using $\text{DTW}_p$
for $p=1,2,4,\infty$ and using a time constraint of $w=n/10$. When
the instance is of the same class as the nearest neighbor,
we considered that the classification was a success.

The average
classification accuracies for the 4~data sets, and for various
number of instances per class is given in Fig.~\ref{fig:classaccuracy}.
The average is taken over 25~000~classification tests ($50\times 500$), over 50~different
databases.

\begin{figure*}
\centering
  \subfloat[Cylinder-Bell-Funnel]{\includegraphics[width=0.45\textwidth]{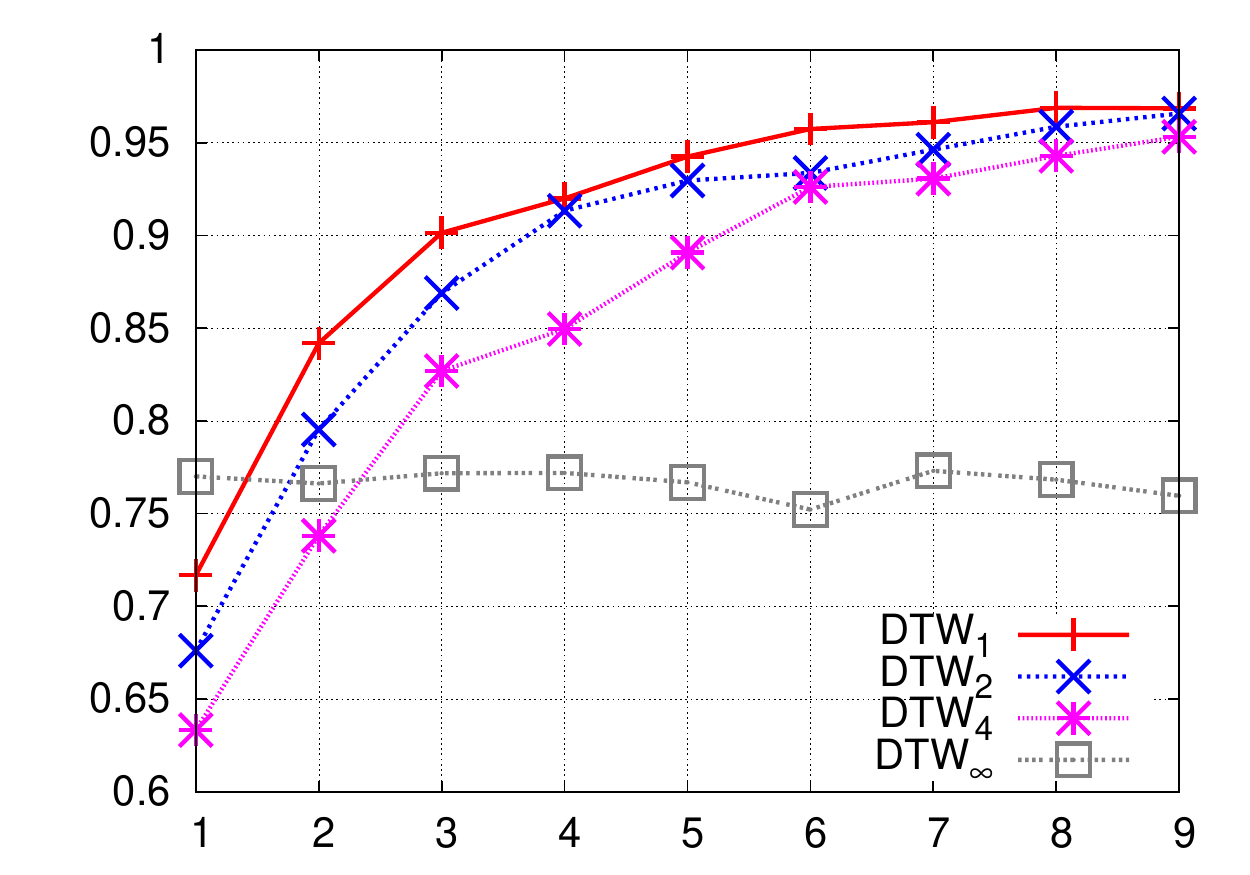}}
  \subfloat[Control Charts]{\includegraphics[width=0.45\textwidth]{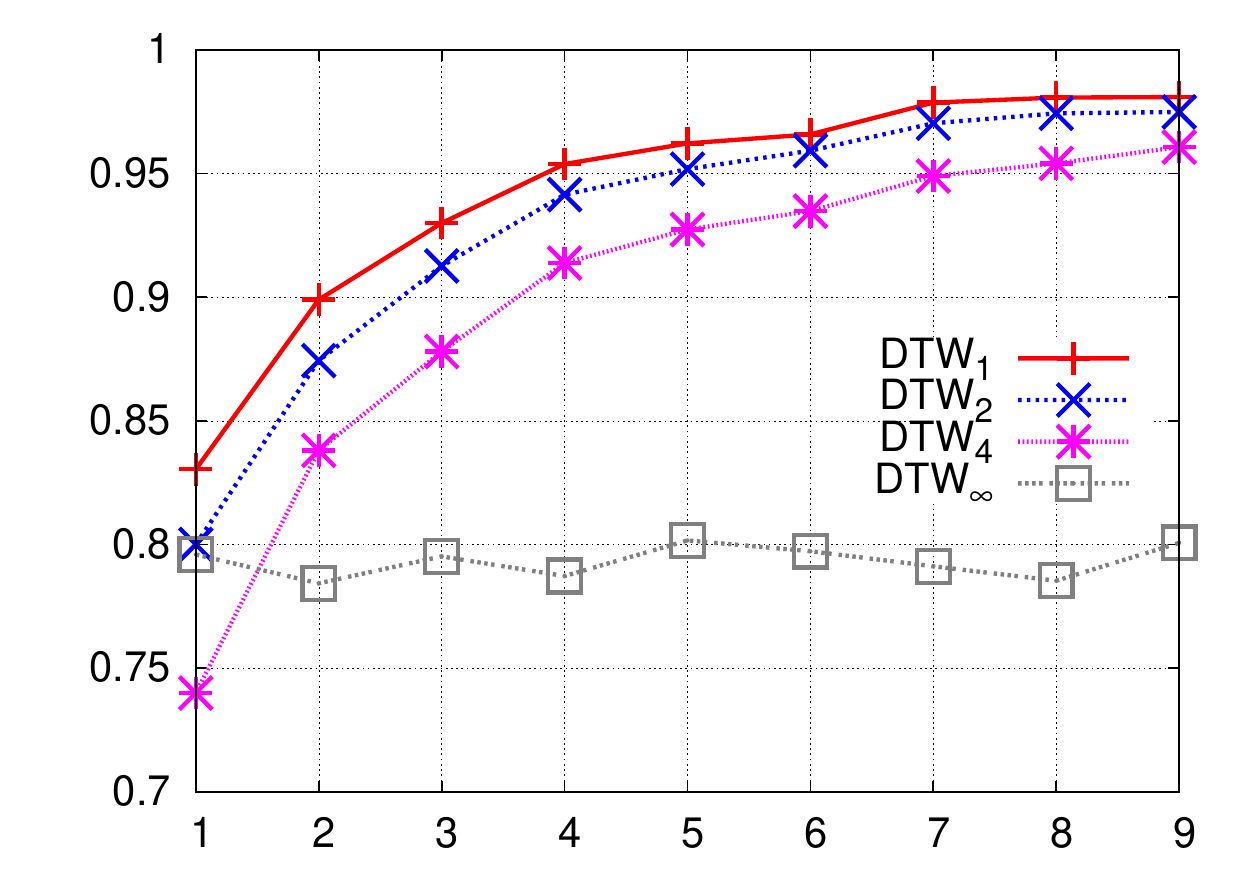}} \\
\subfloat[Waveform]{\includegraphics[width=0.45\textwidth]{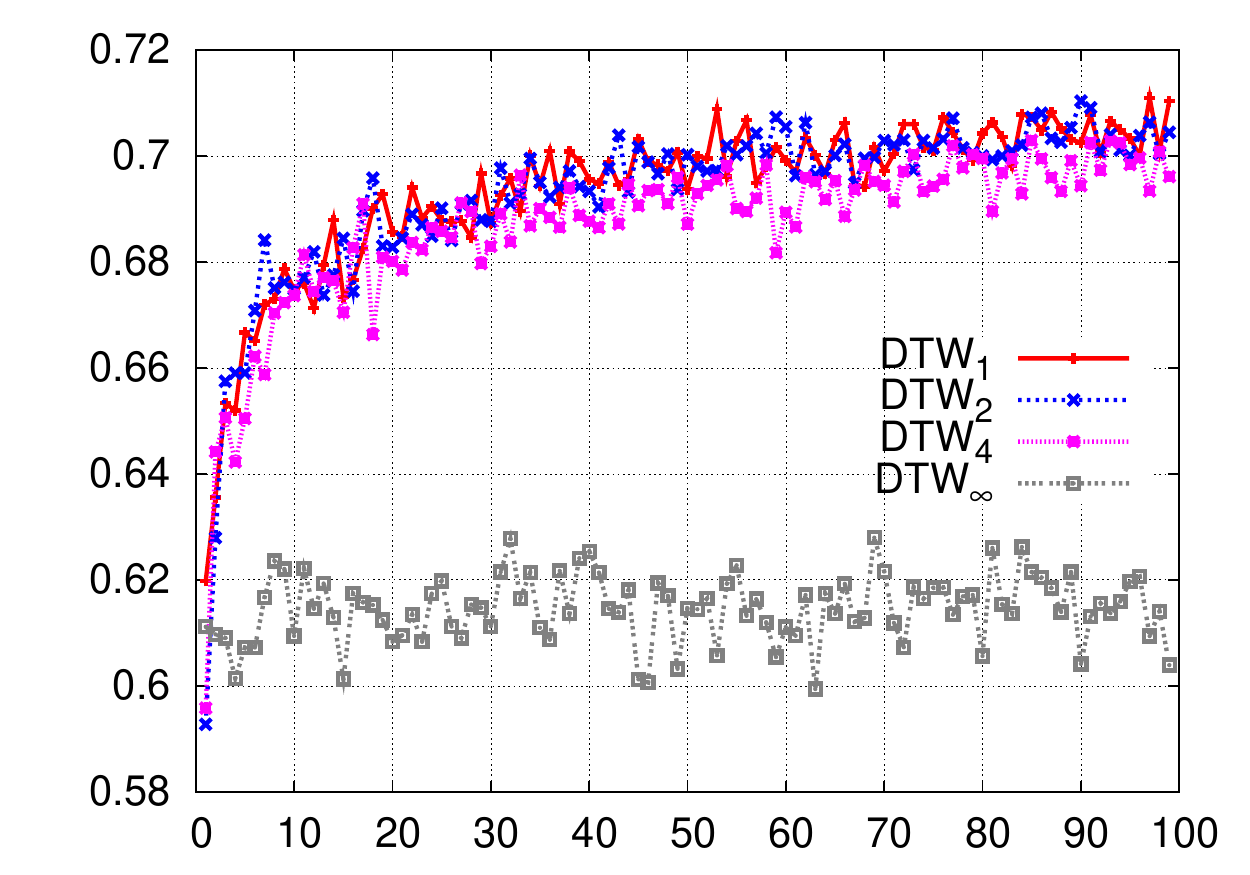}} 
  \subfloat[Wave+Noise]{\includegraphics[width=0.45\textwidth]{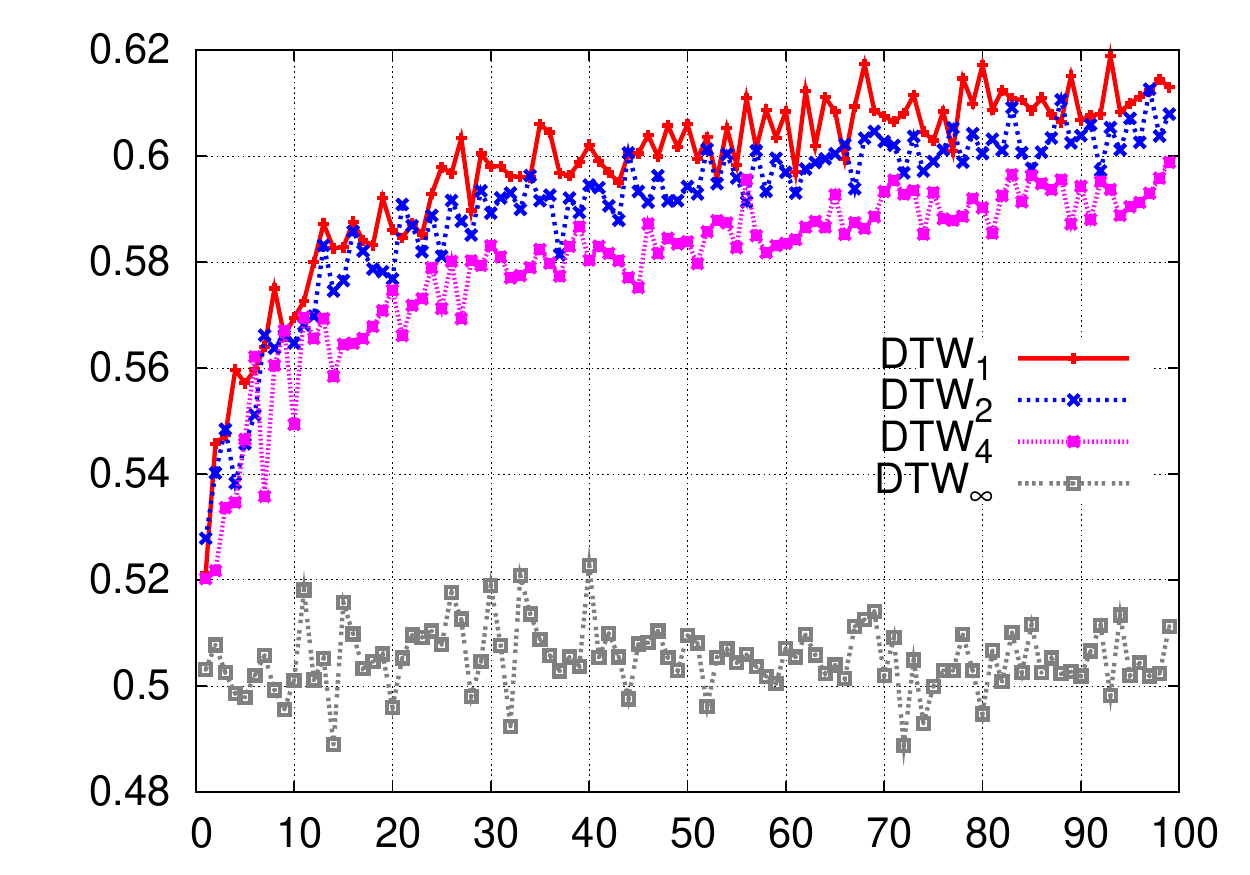}}
\caption{Classification accuracy versus the number of instances of each class in four data sets\label{fig:classaccuracy}}
\end{figure*}

Only when there are one or two instances of each class is $\text{DTW}_{\infty}$
competitive. Otherwise, the accuracy of the $\text{DTW}_{\infty}$-based classification
does not improve as we add more instances of each class. For the 
Waveform data set, $\text{DTW}_1$ and $\text{DTW}_2$ have comparable 
accuracies. 
For the other 3~data sets, 
$\text{DTW}_1$ has  a better nearest-neighbor classification
accuracy than $\text{DTW}_2$. 
Classification with $\text{DTW}_4$ has almost always a lower accuracy than
either $\text{DTW}_1$  or 
$\text{DTW}_2$.

Based on these results, $\text{DTW}_1$ is a good choice  to classify time series
whereas $\text{DTW}_2$ is a close second.  

\section{Computing Lower Bounds on the DTW}

\label{sec:definingUL}
Given a time series $x$, define $U(x)_i = \max_k \{x_k | \,\vert k-i \vert \leq w \}$
and $L(x)_i = \min_k \{x_k |\, \vert k-i \vert \leq w \}$ for $i=1, \ldots, n$. The pair $U(x)$ and $L(x)$ forms
the warping envelope of $x$ (see Fig.~\ref{fig:pykeogh-4}). We leave the time constraint $w$ implicit.

\begin{figure}
\centering
\includegraphics[width=0.7\columnwidth]{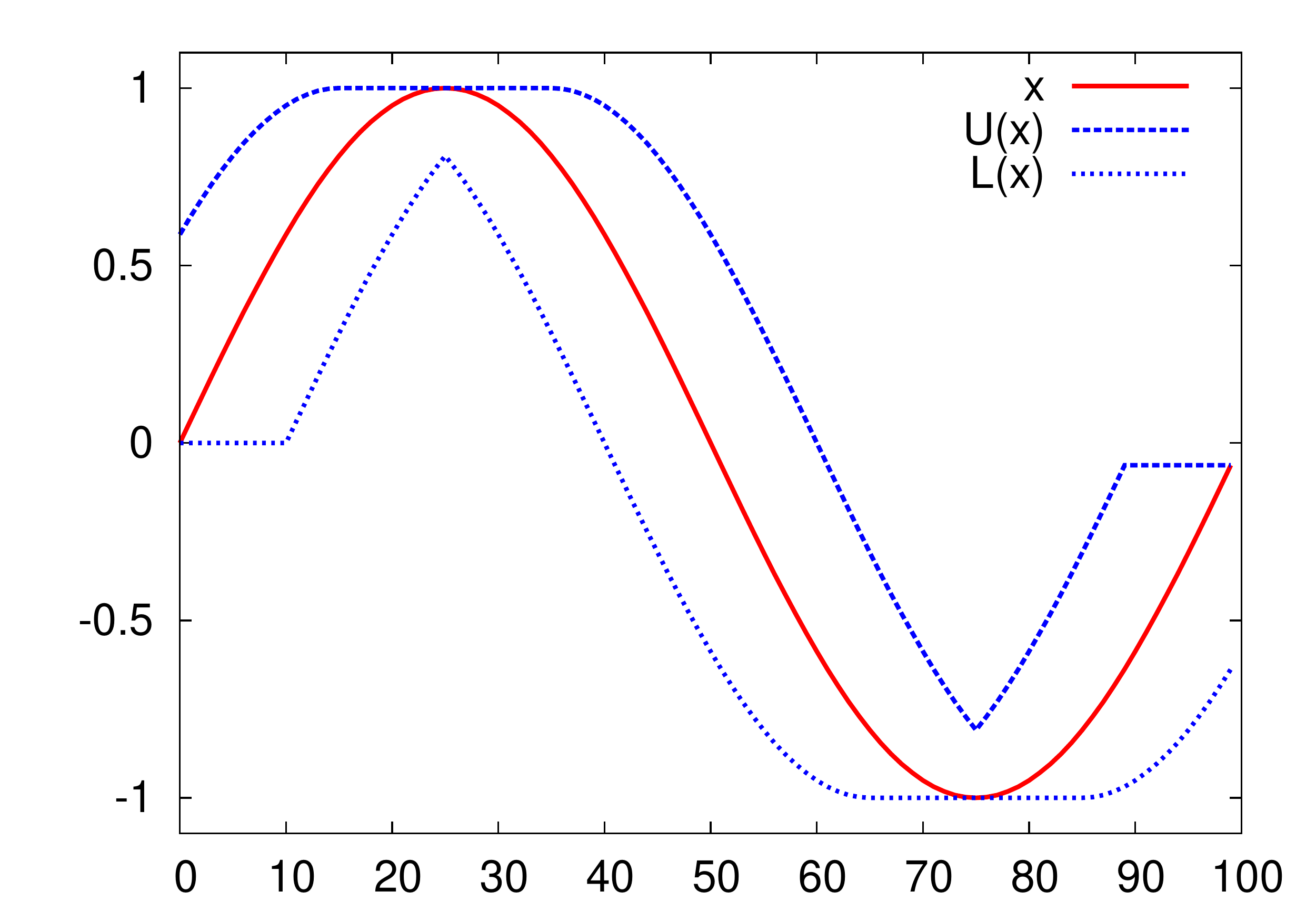}
\caption{\label{fig:pykeogh-4}Warping envelope example}
\end{figure}

The theorem of this section has an elementary proof  requiring only
the following technical lemma.

\begin{lemma}\label{lemma:technical}If $b\in[a,c]$ then $(c-a)^p \geq (c-b)^p+(b-a)^p$ for $1\leq p< \infty$.
\end{lemma}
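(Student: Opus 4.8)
The plan is to normalize the inequality to a single variable. Assume without loss of generality that $a\leq b\leq c$. If $a=c$ then $b=a=c$ and both sides are $0$, so the inequality is trivial; hence I would assume $c-a>0$. Set $u=(b-a)/(c-a)$, so that $u\in[0,1]$ because $b\in[a,c]$, and note $(c-b)/(c-a)=1-u$. Dividing the target inequality $(c-a)^p\geq (c-b)^p+(b-a)^p$ by the positive quantity $(c-a)^p$, the claim reduces to showing
\begin{align*}
u^p+(1-u)^p\leq 1\qquad\text{for all }u\in[0,1]\text{ and }1\leq p<\infty.
\end{align*}

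The second step is the elementary observation that $t^p\leq t$ whenever $t\in[0,1]$ and $p\geq 1$ (indeed $t^p=t\cdot t^{p-1}$ and $t^{p-1}\leq 1$). Applying this with $t=u$ and with $t=1-u$, both of which lie in $[0,1]$, gives $u^p\leq u$ and $(1-u)^p\leq 1-u$; adding these yields $u^p+(1-u)^p\leq u+(1-u)=1$. Multiplying back through by $(c-a)^p$ recovers $(c-a)^p\geq (c-b)^p+(b-a)^p$, as desired.

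There is essentially no real obstacle; the only points requiring a little care are isolating the degenerate case $c=a$ before dividing, and observing that $b\in[a,c]$ is exactly what guarantees both $u\in[0,1]$ and $1-u\in[0,1]$ so that the bound $t^p\leq t$ applies to each summand. As an alternative to the normalization argument, one could invoke convexity directly: the map $t\mapsto t^p$ is convex on $[0,\infty)$ and vanishes at $0$, hence superadditive there, so with $x=c-b\geq 0$ and $y=b-a\geq 0$ one gets $(c-a)^p=(x+y)^p\geq x^p+y^p=(c-b)^p+(b-a)^p$. Either route settles the lemma.
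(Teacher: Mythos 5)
Your proof is correct, and it takes a genuinely different route from the paper. The paper argues by calculus: it treats $(c-b)^p+(b-a)^p$ as a function of $b$ on $[a,c]$, differentiates, and observes that the function is maximized at the endpoints $b\in\{a,c\}$, where its value is $(c-a)^p$ (the case $p=1$ being an identity). You instead normalize by $(c-a)^p$ after disposing of the degenerate case $a=c$, reduce the claim to $u^p+(1-u)^p\leq 1$ for $u\in[0,1]$, and finish with the elementary bound $t^p\leq t$ on $[0,1]$ for $p\geq 1$; your closing remark about superadditivity of $t\mapsto t^p$ on $[0,\infty)$ is an equally valid one-line variant. Your argument buys elementarity and rigor at low cost: it needs no differentiation and no discussion of critical points, and the only case analysis is the trivial $a=c$ check before dividing. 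The paper's calculus argument yields slightly more information as a by-product (the expression is minimized at $b=(a+c)/2$), but that extra fact is not used anywhere, so nothing is lost by your shorter route.
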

\begin{proof}For $p=1$, $(c-b)^p+(b-a)^p=(c-a)^p$. For $p>1$, by deriving
  $(c-b)^p+(b-a)^p$ with respect to $b$, we can show that it is minimized when 
$b=(c+a)/2$ and maximized when $b\in \{a,c\}$. The maximal value is $(c-a)^p$. Hence the result.\end{proof}

The following theorem introduces a generic result that we  use to derive two
lower bounds for the DTW including the original Keogh-Ratanamahatana result~\cite{keogh2005eid}.
%

\begin{theorem}\label{thm:mainthm}
Given two equal-length time series $x$ and $y$ and $1\leq p<\infty$, then for any time series
$h$ satisfying $x_i\geq  h_i\geq U(y)_i$ or  $x_i \leq h_i \leq L(y)_i$
or $h_i=x_i$ for all indexes $i$, we have
\begin{align*}
\text{DTW}_p(x,y)^p & \geq \text{NDTW}_p(x,y)^p \\
                    & \geq   \Vert x-h \Vert_p^p + \text{NDTW}_p(h,y)^p.
\end{align*}
For $p=\infty$, a similar result is true: $\text{DTW}_{\infty}(x,y)  \geq \text{NDTW}_{\infty}(x,y)  \geq   \max(\Vert x-h \Vert_{\infty} ,\text{NDTW}_{\infty}(h,y))$.
\end{theorem}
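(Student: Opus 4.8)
The plan is to argue directly from an optimal non-monotonic warping path, splitting its cost match by match. The inequality $\text{DTW}_p(x,y)^p \geq \text{NDTW}_p(x,y)^p$ is already in hand, so only the second inequality needs work. Let $\Gamma$ be a warping path realizing $\text{NDTW}_p(x,y)$, so $\text{NDTW}_p(x,y)^p = \sum_{(i,j)\in\Gamma} |x_i - y_j|^p$ and every match satisfies $|i-j| \leq w$. The first observation is that for each $(i,j)\in\Gamma$ the value $h_i$ lies between $x_i$ and $y_j$: since $|i-j|\leq w$ we have $L(y)_i \leq y_j \leq U(y)_i$, so the hypothesis $x_i \geq h_i \geq U(y)_i$ gives $y_j \leq h_i \leq x_i$, the hypothesis $x_i \leq h_i \leq L(y)_i$ gives $x_i \leq h_i \leq y_j$, and the hypothesis $h_i = x_i$ is trivial.

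Applying Lemma~\ref{lemma:technical} to the ordered triple among $x_i, h_i, y_j$ that has $h_i$ in the middle gives $|x_i - y_j|^p \geq |x_i - h_i|^p + |h_i - y_j|^p$ for every $(i,j)\in\Gamma$. Summing over $\Gamma$,
\begin{align*}
\text{NDTW}_p(x,y)^p \geq \sum_{(i,j)\in\Gamma} |x_i - h_i|^p + \sum_{(i,j)\in\Gamma} |h_i - y_j|^p .
\end{align*}
In the first sum, each index $i$ occurs in at least one match of $\Gamma$ (a warping path pairs every $x_i$ with some $y_j$) and all summands are nonnegative, hence $\sum_{(i,j)\in\Gamma} |x_i - h_i|^p \geq \sum_i |x_i - h_i|^p = \Vert x - h\Vert_p^p$. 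In the second sum, $\Gamma$ is itself a legal warping path between $h$ and $y$ --- $h$ has length $n$ and the same index pairs still obey $|i-j|\leq w$ --- so $\sum_{(i,j)\in\Gamma} |h_i - y_j|^p \geq \text{NDTW}_p(h,y)^p$ by minimality of the DTW. Adding the two estimates proves the claim for $1\leq p<\infty$.

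For $p=\infty$, I would run the same argument with $\Gamma$ chosen optimal for $\text{NDTW}_\infty(x,y)$: the interval observation now gives $|x_i - y_j| = |x_i - h_i| + |h_i - y_j| \geq \max(|x_i - h_i|, |h_i - y_j|)$ for each match, and taking the maximum over $\Gamma$, then using again that every index appears in $\Gamma$ and that $\Gamma$ is a warping path for $(h,y)$, yields $\text{NDTW}_\infty(x,y) \geq \max(\Vert x - h\Vert_\infty, \text{NDTW}_\infty(h,y))$.

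The only delicate point is the bookkeeping about warping paths: one must check that an optimal path for $(x,y)$ is simultaneously admissible for $(h,y)$ and still ``covers'' every index $i$ so that the pointwise term $\Vert x - h\Vert_p^p$ is not lost. Both facts follow immediately from the definition of a warping path together with the locality window being the same for $x$, $y$, and $h$, so I do not expect a genuine obstacle; the heart of the argument is the elementary per-match inequality supplied by Lemma~\ref{lemma:technical}.
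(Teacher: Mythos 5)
Your proof is correct and follows essentially the same route as the paper: take an optimal non-monotonic warping path, apply Lemma~\ref{lemma:technical} to each match using $h_i \in [\min(x_i,y_j),\max(x_i,y_j)]$, then bound the two resulting sums by $\Vert x-h\Vert_p^p$ (every index is matched at least once) and by $\text{NDTW}_p(h,y)^p$ ($\Gamma$ remains an admissible path for $(h,y)$). You even make explicit a point the paper leaves implicit, namely that the locality constraint $|i-j|\leq w$ is what places $y_j$ between $L(y)_i$ and $U(y)_i$ and hence $h_i$ between $x_i$ and $y_j$.
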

\begin{proof}
Suppose that $1\leq p<\infty$.
Let $\Gamma$ be a warping path such that $\text{NDTW}_p(x,y)^p = \sum_{(i,j) \in \Gamma} \vert x_i-y_j \vert_p^p $.
By the constraint on $h$ and Lemma~\ref{lemma:technical}, we have that $\vert x_i-y_j \vert^p  \geq \vert x_i-h_i \vert^p  + \vert h_i - y_j \vert^p $
for any $(i,j) \in \Gamma$ since $h_i \in [\min(x_i, y_j), \max(x_i,y_j)]$. 
Hence, we have that  $\text{NDTW}_p(x,y)^p \geq \sum_{(i,j) \in \Gamma}  \vert x_i-h_i \vert^p  + \vert h_i - y_j \vert^p
\geq  \Vert x-h \Vert_p^p  + \sum_{(i,j) \in \Gamma} \vert h_i - y_j \vert^p$.
This proves the result since  $ \sum_{(i,j) \in \Gamma} \vert h_i - y_j \vert \geq  \text{NDTW}_p(h,y)$.
For $p=\infty$, we have that $\text{NDTW}_{\infty}(x,y) =\max_{(i,j)\in \Gamma} \vert x_i-y_j \vert \leq 
\max_{(i,j)\in \Gamma} \max(\vert x_i-h_i \vert,\vert h_i-y_j \vert)=\max(\Vert x-h \Vert_{\infty} ,\text{NDTW}_{\infty}(h,y))$,
concluding the proof.
\end{proof}

While Theorem~\ref{thm:mainthm} defines a lower bound ($\Vert x-h \Vert_p$), the next proposition shows that this
lower bound must be a tight approximation as long as $h$ is close to $y$ in the $l_p$ norm.
\begin{proposition}\label{prop:error}
Given two equal-length time series $x$ and $y$, and $1\leq p\leq \infty$ with $h$ as in Theorem~\ref{thm:mainthm},
we have that $\Vert x-h \Vert_p$ approximates both $ \text{DTW}_p(x,y)$
and $ \text{NDTW}_p(x,y)$
within $\Vert h-y \Vert_p$.
\end{proposition}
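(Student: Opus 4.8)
The plan is to use the lower bound from Theorem~\ref{thm:mainthm} together with the ordinary triangle inequality in $l_p$ to squeeze $\text{DTW}_p(x,y)$ and $\text{NDTW}_p(x,y)$ into a window of width $\Vert h-y\Vert_p$ around $\Vert x-h\Vert_p$. Concretely, I must establish the two-sided estimate
\begin{align*}
\Vert x-h\Vert_p - \Vert h-y\Vert_p \;\leq\; \text{NDTW}_p(x,y) \;\leq\; \text{DTW}_p(x,y) \;\leq\; \Vert x-h\Vert_p + \Vert h-y\Vert_p,
\end{align*}
from which $\bigl|\,\text{DTW}_p(x,y) - \Vert x-h\Vert_p\,\bigr| \leq \Vert h-y\Vert_p$ and likewise for $\text{NDTW}_p$, which is the assertion of the proposition.

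For the upper bound, I would note that $\text{DTW}_p(x,y) \leq \Vert x-y\Vert_p$ (the chain of inequalities recorded at the end of Section~\ref{sec:dtw}), and then apply the $l_p$ triangle inequality $\Vert x-y\Vert_p \leq \Vert x-h\Vert_p + \Vert h-y\Vert_p$. This handles both $\text{DTW}_p$ and $\text{NDTW}_p$ at once, since $\text{NDTW}_p(x,y)\leq\text{DTW}_p(x,y)$. For the lower bound, I would start from Theorem~\ref{thm:mainthm}, which for $1\leq p<\infty$ gives $\text{NDTW}_p(x,y)^p \geq \Vert x-h\Vert_p^p + \text{NDTW}_p(h,y)^p \geq \Vert x-h\Vert_p^p$, hence $\text{NDTW}_p(x,y) \geq \Vert x-h\Vert_p \geq \Vert x-h\Vert_p - \Vert h-y\Vert_p$; the case $p=\infty$ is identical using the $\max$ form of the theorem. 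Combining the two bounds and using $\text{NDTW}_p \leq \text{DTW}_p$ yields the claimed approximation for both distances.

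There is essentially no hard step here: the proposition is a direct corollary packaging Theorem~\ref{thm:mainthm} with the triangle inequality. The only point that needs a moment's care is making sure the one-sided bound from the theorem ($\text{NDTW}_p(x,y)\geq\Vert x-h\Vert_p$, with no subtraction) is correctly weakened to the symmetric two-sided form by subtracting the nonnegative quantity $\Vert h-y\Vert_p$, and that the upper bound argument is applied to $\text{NDTW}_p$ as well as $\text{DTW}_p$. I would also remark that the lower-bound direction does not even require the full triangle inequality — it is free from the theorem — so the $\Vert h-y\Vert_p$ slack is really only consumed on the upper side; stating it symmetrically is just the cleanest way to phrase "approximates within."
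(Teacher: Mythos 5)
Your proposal is correct and follows essentially the same route as the paper's proof: the upper bound comes from $\text{DTW}_p(x,y)\leq\Vert x-y\Vert_p\leq\Vert x-h\Vert_p+\Vert h-y\Vert_p$ via the $l_p$ triangle inequality, and the lower bound $\text{NDTW}_p(x,y)\geq\Vert x-h\Vert_p$ comes directly from Theorem~\ref{thm:mainthm}. Your explicit two-sided chain and the remark that the slack $\Vert h-y\Vert_p$ is only consumed on the upper side are just a slightly more detailed packaging of the same argument.
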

\begin{proof}
By the triangle inequality over $l_p$, we have  $\Vert x-h \Vert_p +   \Vert h-y \Vert_p \geq\Vert x-y \Vert_p $. 
Since $\Vert x-y \Vert_p \geq  \text{DTW}_p(x,y)$,
we have
$\Vert x-h \Vert_p +   \Vert h-y \Vert_p \geq \text{DTW}_p(x,y)$,
and hence
$ \Vert h-y \Vert_p \geq \text{DTW}_p(x,y) - \Vert x-h \Vert_p $.
This proves the result since by
 Theorem~\ref{thm:mainthm},
we have that $\text{DTW}_p(x,y)\geq \text{NDTW}_p(x,y) \geq \Vert x-h \Vert_p$.
\end{proof}

This bound on the approximation error is reasonably tight. If $x$ and $y$ 
are separated by a constant, then $\text{DTW}_1(x,y) =\Vert x-y \Vert_1$
by Proposition~\ref{prof:band1} and $\Vert x-y \Vert_1 = \sum_i \vert x_i-y_i\vert
=  \sum_i \vert x_i-h_i\vert+\vert h_i-y_i\vert=\Vert x-h \Vert_1+ \Vert h-y \Vert_1 $.
Hence, the approximation error is exactly $\Vert h-y \Vert_1 $ in such instances.

%
%

\section{Warping Envelopes}
\label{sec:envelopes}
The computation of the warping envelope $U(x),L(x)$ requires $O(nw)$ time using the naive approach
of repeatedly computing the maximum and the minimum over windows. 
Instead, we compute the envelope using at most $3n$~comparisons between data-point values~\cite{lemiremaxmin}
using Algorithm~\ref{algo:mystreamingalo}.

\begin{algorithm}
\begin{small}
 \begin{algorithmic}
\STATE \textbf{input} a time series $a$ indexed from $1$ to $n$
\STATE \textbf{input} some DTW time constraint $w$
\RETURN warping envelope $U,L$ (two time series of length $n$)
\STATE $u$, $l$ $\leftarrow$ empty double-ended queues, we append to ``back'' 
\STATE append $1$ to $u$ and $l$ 
\FOR{ $i$ in $\{2,\ldots,n\}$}\label{alg:mainloop}
\IF{$i\geq w+1$}
\STATE $U_{i-w} \leftarrow a_{\textrm{front}(u)}$, $L_{i-w} \leftarrow a_{\textrm{front}(l)}$ 
\ENDIF
\IF{$a_i > a_{i-1}$}\label{alg:firstcompare}
\STATE pop $u$ from back\label{alg:removemax}
\WHILE{ $a_i > a_{\textrm{back}(u)}$}\label{alg:while1}
\STATE pop $u$ from back 
\ENDWHILE
\ELSE
\STATE pop $l$ from back \label{alg:removemin}
\WHILE{ $a_i < a_{\textrm{back}(l)}$} \label{alg:while2}
\STATE pop $l$ from back
\ENDWHILE
\ENDIF
\STATE append $i$ to $u$ and $l$\label{alg:append}
\IF{$i=2w+1+\textrm{front}(u)$}  \label{alg:ensurescontainted}
\STATE pop $u$ from front
\ELSIF{$i=2w+1+\textrm{front}(l)$}
\STATE pop $l$ from front
\ENDIF
\ENDFOR
\FOR{ $i$ in $\{n+1,\ldots,n+w\}$}\label{alg:secondloop}
\STATE $U_{i-w} \leftarrow a_{\textrm{front}(u)}$, $L_{i-w} \leftarrow a_{\textrm{front}(l)}$
\IF{i-front($u$)$\geq 2w+1$}
\STATE pop $u$ from front
\ENDIF
\IF{i-front($l$)$\geq 2w+1$}
\STATE pop $l$ from front
\ENDIF
\ENDFOR
 \end{algorithmic}
\end{small}
\caption{\label{algo:mystreamingalo}Streaming algorithm to compute the warping envelope using no more than $3n$~comparisons
}
\end{algorithm}

Envelopes are asymmetric in the sense that if $x$ is in the envelope of $y$ ($L(y) \leq x \leq U(x)$),
it does not follow that $x$ is in the envelope of $y$ ($L(x) \leq y \leq U(x)$). For example, $x=0,0,\ldots,0$
is in the envelope of $y=1,-1,1,-1,1,-1,\ldots,1$ for $w>1$, but the reverse is not true.
However, the next lemma shows that if $x$ is below or above the envelope of $y$, then
$y$ is above or below the envelope of $x$.


\begin{lemma}\label{lemma:mutualenv}$L(x) \geq y$ is equivalent to $x \geq U(y)$.\label{lemma:ul1}
\end{lemma}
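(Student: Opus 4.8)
\textbf{Proof plan for Lemma~\ref{lemma:mutualenv}.}

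The plan is to prove the two implications separately, using only the definitions $U(y)_i = \max\{y_k : |k-i|\leq w\}$ and $L(x)_i = \min\{x_k : |k-i|\leq w\}$, together with the observation that the index relation $|k-i|\leq w$ is symmetric in $k$ and $i$. So the real content is a small quantifier manipulation: ``for every $i$, some neighbor of $i$ dominates $y_i$'' should be massaged into ``for every $i$, $x_i$ dominates every neighbor of $i$.''

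First I would assume $x \geq U(y)$ pointwise, i.e.\ $x_i \geq \max\{y_k : |k-i|\leq w\}$ for all $i$, which is the same as $x_i \geq y_k$ whenever $|k-i|\leq w$. Fix an index $j$; I want $L(x)_j \geq y_j$, i.e.\ $\min\{x_k : |k-j|\leq w\} \geq y_j$, i.e.\ $x_k \geq y_j$ for every $k$ with $|k-j|\leq w$. But for such a $k$ we have $|j-k|\leq w$ as well, so the hypothesis applied with the roles ``$i=k$, index $=j$'' gives exactly $x_k \geq y_j$. Since $j$ was arbitrary, $L(x)\geq y$.

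Conversely I would assume $L(x)\geq y$ pointwise, i.e.\ $x_k \geq y_j$ whenever $|k-j|\leq w$. Fix $i$; I want $x_i \geq U(y)_i = \max\{y_k : |k-i|\leq w\}$, i.e.\ $x_i \geq y_k$ for every $k$ with $|k-i|\leq w$. For such a $k$, $|i-k|\leq w$, so the hypothesis with ``$k=i$, $j=k$'' yields $x_i \geq y_k$. Hence $x\geq U(y)$, completing the equivalence.

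I do not expect a genuine obstacle here; the only thing to be careful about is not conflating the two ``index slots'' in $U$ and $L$ (one is the center of the window, one ranges over the window), and to invoke the symmetry $|k-i|\leq w \iff |i-k|\leq w$ explicitly so the pointwise inequalities line up. Everything else is a one-line unwinding of $\max$ and $\min$.
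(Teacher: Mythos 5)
Your proof is correct and rests on the same idea as the paper's: both statements unwind to the single symmetric condition that $x_i \geq y_j$ whenever $|i-j|\leq w$. The paper phrases one direction contrapositively (if $x_i < U(y)_i$ then some $y_j$ exceeds $x_i$ with $|i-j|\leq w$, so $L(x)_j < y_j$) while you argue both directions directly, but this is only a cosmetic difference.
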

\begin{proof}
Suppose $x_i < U(y)_i$ for some $i$, then there is
$j$ such that $|i-j| \leq w$ and $x_i < y_j$, therefore $L(x)_j < y_j$.
It follows that $L(x) \geq y$ implies $x \geq U(y)$. The reverse
implication follows similarly.
\end{proof}



We know that $L(h)$ is less or equal to  $h$ whereas $U(h)$ is greater or equal to $h$. The next
lemma shows that $U(L(h))$ is less or equal than $h$ whereas $L(U(h))$ is greater or equal than $h$.

\begin{lemma}We have $U(L(h))\leq h$ and  $L(U(h)) \geq h$ for any  $h$.\label{lemma:ul2}
\end{lemma}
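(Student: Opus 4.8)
The plan is to prove each inequality by unwinding the definitions of $U$ and $L$ and showing that every value of the outer envelope is dominated (resp.\ dominates) the corresponding value of $h$. I will do the case $U(L(h)) \leq h$; the other follows by the symmetric argument (or by negating and applying the first case, since $U(-x) = -L(x)$ and $L(-x) = -U(x)$).

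First I would fix an index $i$ and compute $U(L(h))_i = \max_k \{ L(h)_k \mid |k-i| \leq w \}$. For each such $k$ we have $L(h)_k = \min_m \{ h_m \mid |m-k| \leq w \}$, and in particular, taking $m = i$ (which is legitimate since $|i-k| \leq w$), we get $L(h)_k \leq h_i$. Since this holds for every $k$ in the window around $i$, the maximum over those $k$ is also at most $h_i$, i.e.\ $U(L(h))_i \leq h_i$. As $i$ was arbitrary, $U(L(h)) \leq h$ pointwise.

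For the second inequality $L(U(h)) \geq h$, the same reasoning applies with the roles of $\min$ and $\max$ interchanged: $L(U(h))_i = \min_k \{ U(h)_k \mid |k-i| \leq w \}$, and each $U(h)_k = \max_m \{ h_m \mid |m-k| \leq w \} \geq h_i$ by choosing $m = i$, so the minimum over $k$ is still at least $h_i$.

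There is essentially no obstacle here: the only thing to be slightly careful about is the boundary behavior of the windows near the ends of the time series (indices close to $1$ or $n$), but since the windows $\{k : |k-i| \leq w\}$ always contain $i$ itself regardless of truncation at the boundary, the choice $m = i$ is always available and the argument goes through verbatim. I would present it compactly as a two-line pointwise argument rather than belaboring the symmetry.
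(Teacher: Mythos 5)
Your argument is correct and is essentially identical to the paper's proof: both observe that $L(h)_k \leq h_i$ whenever $\vert i-k\vert \leq w$ (because $i$ lies in the window defining $L(h)_k$), then take the maximum over $k$ to obtain $U(L(h))_i \leq h_i$, with the second inequality following symmetrically. No issues.
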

\begin{proof}By definition, we have that $L(h)_j\leq h_i$ whenever $\vert i-j\vert \leq w$. Hence,
$\max_{j| \vert i-j\vert \leq w} L(h)_j\leq h_i$ which proves $U(L(h))\leq h$. The second
result ($L(U(h)) \geq h$) follows similarly.
\end{proof}

Whereas $L(U(h))$ is greater or equal than $h$, the next lemma shows that $U(L(U(h)))$ is
equal to $U(h)$.

\begin{corollary}We have $U(h)=U(L(U(h)))$  and $L(h) = L(U(L(h)))$ for any $h$.\end{corollary}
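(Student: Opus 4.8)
The plan is to prove the two identities by sandwiching and using the monotonicity of the operators $U$ and $L$ together with Lemmas~\ref{lemma:ul1} and~\ref{lemma:ul2}. Note first that both $U$ and $L$ are monotone in the pointwise order: if $f\leq g$ then $U(f)\leq U(g)$ and $L(f)\leq L(g)$, since taking a windowed maximum (resp.\ minimum) preserves inequalities. I would record this observation explicitly at the start, since it is used repeatedly.

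For the first identity $U(h)=U(L(U(h)))$, I would prove the two inequalities separately. For ``$\geq$'': by Lemma~\ref{lemma:ul2} applied to $g := U(h)$ in place of $h$, we have $U(L(g))\leq g$, i.e.\ $U(L(U(h)))\leq U(h)$. For ``$\leq$'': again by Lemma~\ref{lemma:ul2} we have $L(U(h))\geq h$, and applying the monotone operator $U$ to both sides gives $U(L(U(h)))\geq U(h)$. Combining the two inequalities yields $U(h)=U(L(U(h)))$. The second identity $L(h)=L(U(L(h)))$ follows by the symmetric argument: Lemma~\ref{lemma:ul2} gives $L(U(f))\geq f$ with $f := L(h)$, i.e.\ $L(U(L(h)))\geq L(h)$, while $U(L(h))\leq h$ from Lemma~\ref{lemma:ul2} combined with monotonicity of $L$ gives $L(U(L(h)))\leq L(h)$.

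I do not expect a genuine obstacle here: everything reduces to the two one-line facts in Lemma~\ref{lemma:ul2} plus monotonicity. The only point that requires a small amount of care is making sure the direction of each application of Lemma~\ref{lemma:ul2} is correct --- in particular remembering that the lemma is being instantiated at a \emph{transformed} argument ($U(h)$ or $L(h)$) rather than at $h$ itself, and that applying $U$ or $L$ to an inequality preserves it rather than reversing it. One could alternatively phrase this entirely via Lemma~\ref{lemma:ul1}, observing that $L(U(h))\geq h$ is equivalent to $U(h)\geq U(h)$ and similar tautologies, but the monotonicity route above is the cleanest and I would present that.

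\begin{proof}
Both operators $U$ and $L$ are monotone for the pointwise order: if $f\leq g$ pointwise, then $U(f)\leq U(g)$ and $L(f)\leq L(g)$, since a windowed maximum or minimum preserves inequalities. We prove $U(h)=U(L(U(h)))$; the identity $L(h)=L(U(L(h)))$ follows by a symmetric argument.

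Applying Lemma~\ref{lemma:ul2} with $U(h)$ in place of $h$ gives $U(L(U(h)))\leq U(h)$. Conversely, Lemma~\ref{lemma:ul2} gives $L(U(h))\geq h$; applying the monotone operator $U$ to both sides yields $U(L(U(h)))\geq U(h)$. Hence $U(L(U(h)))=U(h)$.

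For the second identity, Lemma~\ref{lemma:ul2} with $L(h)$ in place of $h$ gives $L(U(L(h)))\geq L(h)$, while $U(L(h))\leq h$ from Lemma~\ref{lemma:ul2}, together with the monotonicity of $L$, gives $L(U(L(h)))\leq L(h)$. Hence $L(U(L(h)))=L(h)$.
\end{proof}
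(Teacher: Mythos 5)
Your proof is correct and follows essentially the same route as the paper: both apply Lemma~\ref{lemma:ul2} once at $h$ (using that $L(U(h))\geq h$ implies $U(L(U(h)))\geq U(h)$) and once at $U(h)$ (giving $U(L(U(h)))\leq U(h)$), with the second identity by symmetry. The only difference is that you make the monotonicity of $U$ and $L$ explicit, which the paper leaves implicit in the step from $L(U(h))\geq h$ to $U(L(U(h)))\geq U(h)$.
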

\begin{proof}By Lemma~\ref{lemma:ul2}, we have $L(U(h)) \geq h$, hence $U(L(U(h))) \geq U(h)$.
Again by  Lemma~\ref{lemma:ul2}, we have $U(L(h'))\leq h'$ for $h'=U(h)$ or  $U(L(U(h)))\leq U(h)$.
Hence, $U(h)=U(L(U(h)))$. The next result ($L(h) = L(U(L(h)))$) follows similarly.
\end{proof}

\section{LB\_Keogh}
\label{sec:lbkeogh}
Let $H(x,y)$ be the \emph{projection of $x$ on $y$} defined as
\begin{align}\label{eqn:hxy}
H(x,y)_{i}= \begin{cases}U(y)_i  & \text{if $x_i\geq U(y)_i$}\\
L(y)_i  & \text{if $x_i\leq L(y)_i$} \\
x_i & \text{otherwise,}
\end{cases}
\end{align}
for $i=1,2,\ldots,n$.
We have that $H(x,y)$ is in the envelope of $y$. By Theorem~\ref{thm:mainthm} and setting $h= H(x,y)$, 
we have that $\text{NDTW}_p(x,y)^p\geq \Vert x- H(x,y)\Vert_p^p + NDTW(H(x,y),y)^p$ for $1 \leq p<\infty$.
Write $\text{LB\_Keogh}_p(x,y)=\Vert x-H(x,y) \Vert_p$ (see Fig.~\ref{fig:pykeogh-2}).
The following corollary follows  from Theorem~\ref{thm:mainthm} and Proposition~\ref{prop:error}.

\begin{figure}
\centering
\includegraphics[width=0.7\columnwidth]{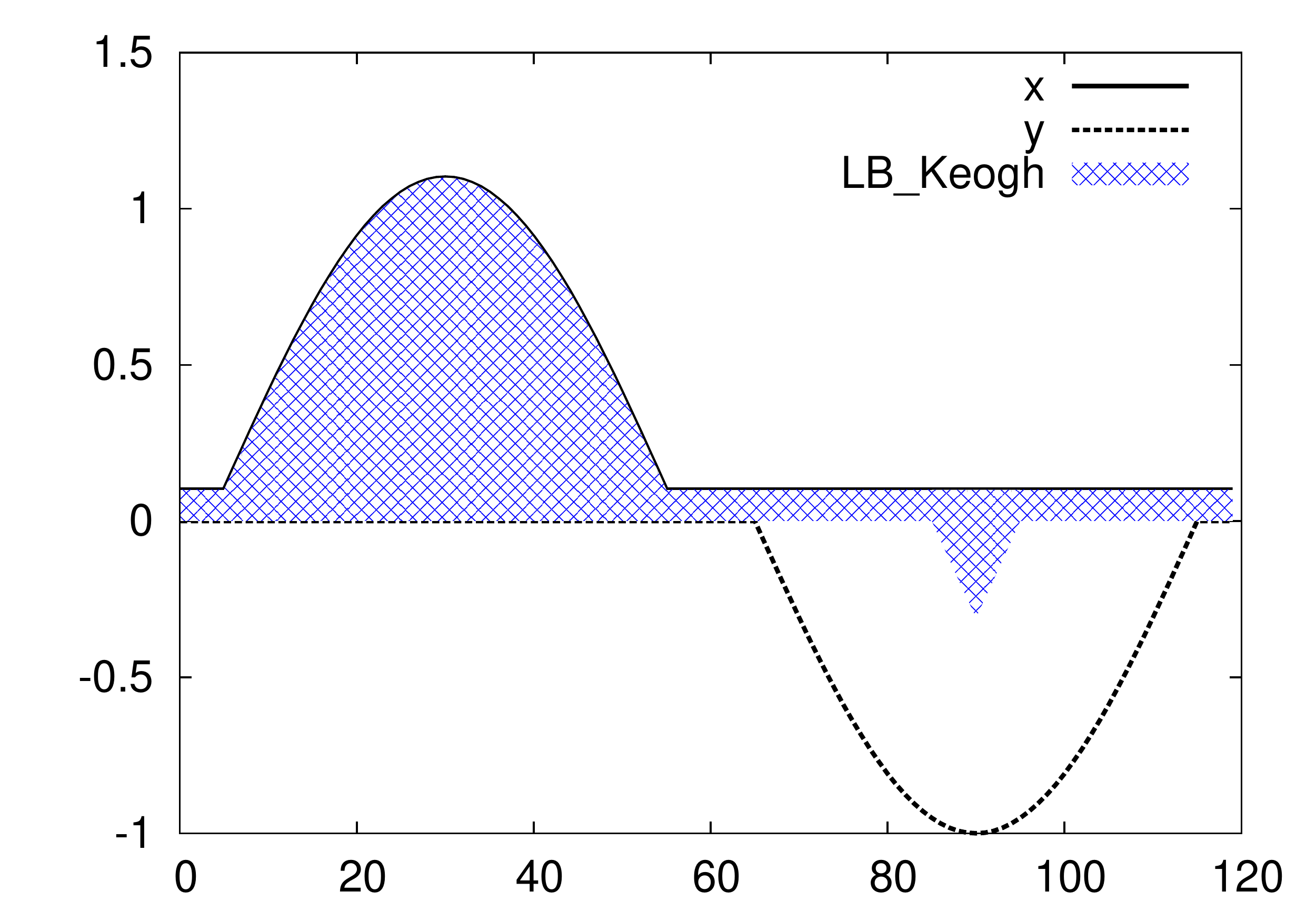}
\caption{\label{fig:pykeogh-2}LB\_Keogh example: the area of the marked region is $\text{LB\_Keogh}_1(x,y)$}
\end{figure}

\begin{corollary}
Given two equal-length time series $x$ and $y$ and $1\leq p\leq\infty$, then
\begin{itemize}
\item $\text{LB\_Keogh}_p(x,y)$  is a lower bound
to the DTW: \begin{align*}\text{DTW}_p(x,y)  \geq \text{NDTW}_p(x,y) \geq   \text{LB\_Keogh}_p(x,y);\end{align*}
\item the accuracy of LB\_Keogh is bounded by the distance to the envelope: \begin{align*}\text{DTW}_p(x,y)-\text{LB\_Keogh}_p(x,y) \leq \Vert \max \{U(y)_i - y_i, y_i-L(y)_i\}_i \Vert_p\end{align*} for all $x$. 
\end{itemize}
\end{corollary}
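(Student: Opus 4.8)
The plan is to instantiate the two results already proved in this section at the particular choice $h = H(x,y)$, after checking that this $h$ meets the hypotheses of Theorem~\ref{thm:mainthm}. So first I would observe, directly from the definition of $H(x,y)$ in~\eqref{eqn:hxy}, that the pointwise admissibility condition holds: when $x_i \geq U(y)_i$ we have $x_i \geq H(x,y)_i = U(y)_i$; when $x_i \leq L(y)_i$ we have $x_i \leq H(x,y)_i = L(y)_i$; and otherwise $H(x,y)_i = x_i$. This is exactly the hypothesis ``$x_i\geq h_i\geq U(y)_i$ or $x_i \leq h_i \leq L(y)_i$ or $h_i=x_i$'' required by Theorem~\ref{thm:mainthm} (and by Proposition~\ref{prop:error}).

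For the first bullet, with $h = H(x,y)$ Theorem~\ref{thm:mainthm} gives, for $1\leq p<\infty$, $\text{DTW}_p(x,y)^p \geq \text{NDTW}_p(x,y)^p \geq \Vert x - H(x,y)\Vert_p^p + \text{NDTW}_p(H(x,y),y)^p$. Dropping the nonnegative term $\text{NDTW}_p(H(x,y),y)^p$ and taking $p$-th roots yields $\text{DTW}_p(x,y) \geq \text{NDTW}_p(x,y) \geq \Vert x - H(x,y)\Vert_p = \text{LB\_Keogh}_p(x,y)$. The case $p=\infty$ is identical, using the $\max$-form of the theorem and $\max(\Vert x-h\Vert_\infty,\text{NDTW}_\infty(h,y)) \geq \Vert x-h\Vert_\infty$.

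For the second bullet, Proposition~\ref{prop:error} applied with the same $h = H(x,y)$ says that $\Vert x - H(x,y)\Vert_p = \text{LB\_Keogh}_p(x,y)$ approximates $\text{DTW}_p(x,y)$ within $\Vert H(x,y) - y\Vert_p$, i.e. $\text{DTW}_p(x,y)-\text{LB\_Keogh}_p(x,y) \leq \Vert H(x,y) - y\Vert_p$. It then remains to show $\Vert H(x,y) - y\Vert_p \leq \Vert \max\{U(y)_i - y_i,\, y_i - L(y)_i\}_i\Vert_p$, which I would establish pointwise: in the first branch $\vert H(x,y)_i - y_i\vert = U(y)_i - y_i$, in the second $\vert H(x,y)_i - y_i\vert = y_i - L(y)_i$, and in the ``otherwise'' branch $L(y)_i \leq x_i \leq U(y)_i$ together with $L(y)_i \leq y_i \leq U(y)_i$ forces $\vert x_i - y_i\vert \leq \max\{U(y)_i - y_i,\, y_i - L(y)_i\}$. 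Since the $l_p$ norm is monotone under pointwise domination of absolute values, the claimed bound follows, and it holds for every $x$ because the right-hand side does not involve $x$.

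The whole argument is essentially bookkeeping on top of the two cited statements; the only step needing a little care is the pointwise estimate in the ``otherwise'' branch, where one splits on the sign of $x_i - y_i$ and uses that both $x_i$ and $y_i$ lie in $[L(y)_i, U(y)_i]$. That is elementary, so I do not anticipate any real obstacle.
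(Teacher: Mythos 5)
Your proposal is correct and follows exactly the route the paper intends: the paper simply asserts the corollary follows from Theorem~\ref{thm:mainthm} and Proposition~\ref{prop:error} with $h=H(x,y)$, and you fill in the same instantiation, the hypothesis check, and the pointwise bound $\vert H(x,y)_i-y_i\vert \leq \max\{U(y)_i-y_i,\,y_i-L(y)_i\}$ that makes the second bullet uniform in $x$. No gaps.
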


%

Algorithm~\ref{algo:lbkeogh} shows how LB\_Keogh can be used
to find a nearest neighbor in a time series database.
The computation of the envelope of the query time series is done
once (see line~\ref{line:mylbkeoghenvelope}).
The lower bound is computed in lines~\ref{line:mylbkeoghstart}
to \ref{line:mylbkeoghend}. If the lower bound is sufficiently
large, the DTW is not computed (see line~\ref{line:mylbkeoghtest}). 
Ignoring the computation of the full DTW, at most
$(2N+3)n$~comparisons between data points are required to process a database
containing $N$~time series.

\begin{algorithm}
\begin{small}
 \begin{algorithmic}[1]
\STATE \textbf{input} a time series $a$ indexed from $1$ to $n$
\STATE \textbf{input} a set $S$ of candidate time series
\RETURN the nearest neighbor $B$ to $a$ in $S$ under $\text{DTW}_1$
\STATE $U,L \leftarrow \text{envelope}(a)$\label{line:mylbkeoghenvelope}
\STATE $b\leftarrow \infty$
\FOR {candidate $c$ in $S$}
\STATE $\beta \leftarrow 0$\label{line:mylbkeoghstart}
\FOR{$i \in \{1,2,\ldots,n\}$}
\IF{$c_i > U_i$}
\STATE $\beta \leftarrow \beta + c_i-U_i$
\ELSIF{$c_i < L_i$}
\STATE $\beta \leftarrow \beta + L_i-c_i$\label{line:mylbkeoghend}
\ENDIF
\ENDFOR
\IF{$\beta < b$}\label{line:mylbkeoghtest}
\STATE $t\leftarrow \text{DTW}_1(a,c)$
\IF{$t < b$}
\STATE $b\leftarrow t$
\STATE $B\leftarrow c$
\ENDIF
\ENDIF
\ENDFOR
 \end{algorithmic}
\end{small}
\caption{\label{algo:lbkeogh}LB\_Keogh-based Nearest-Neighbor algorithm
}
\end{algorithm}

\section{LB\_Improved}

\label{sec:lbimproved}

Write $\text{LB\_Improved}_p(x,y)^p=\text{LB\_Keogh}_p(x,y)^p+\text{LB\_Keogh}_p(y,H(x,y))^p$
for $1 \leq p<\infty$. 
By definition, we have  $\text{LB\_Improved}_p(x,y)\geq \text{LB\_Keogh}_p(x,y)$.
Intuitively, whereas $\text{LB\_Keogh}_p(x,y)$ measures the distance between $x$ and the envelope of $y$,
$\text{LB\_Keogh}_p(y,H(x,y))$ measures the distance between $y$ and the envelope of the projection
of $x$ on $y$  (see Fig.~\ref{fig:pykeogh-3}).
The next corollary shows that LB\_Improved is a lower bound to the DTW\@.

\begin{figure}
\centering
\includegraphics[width=0.7\columnwidth]{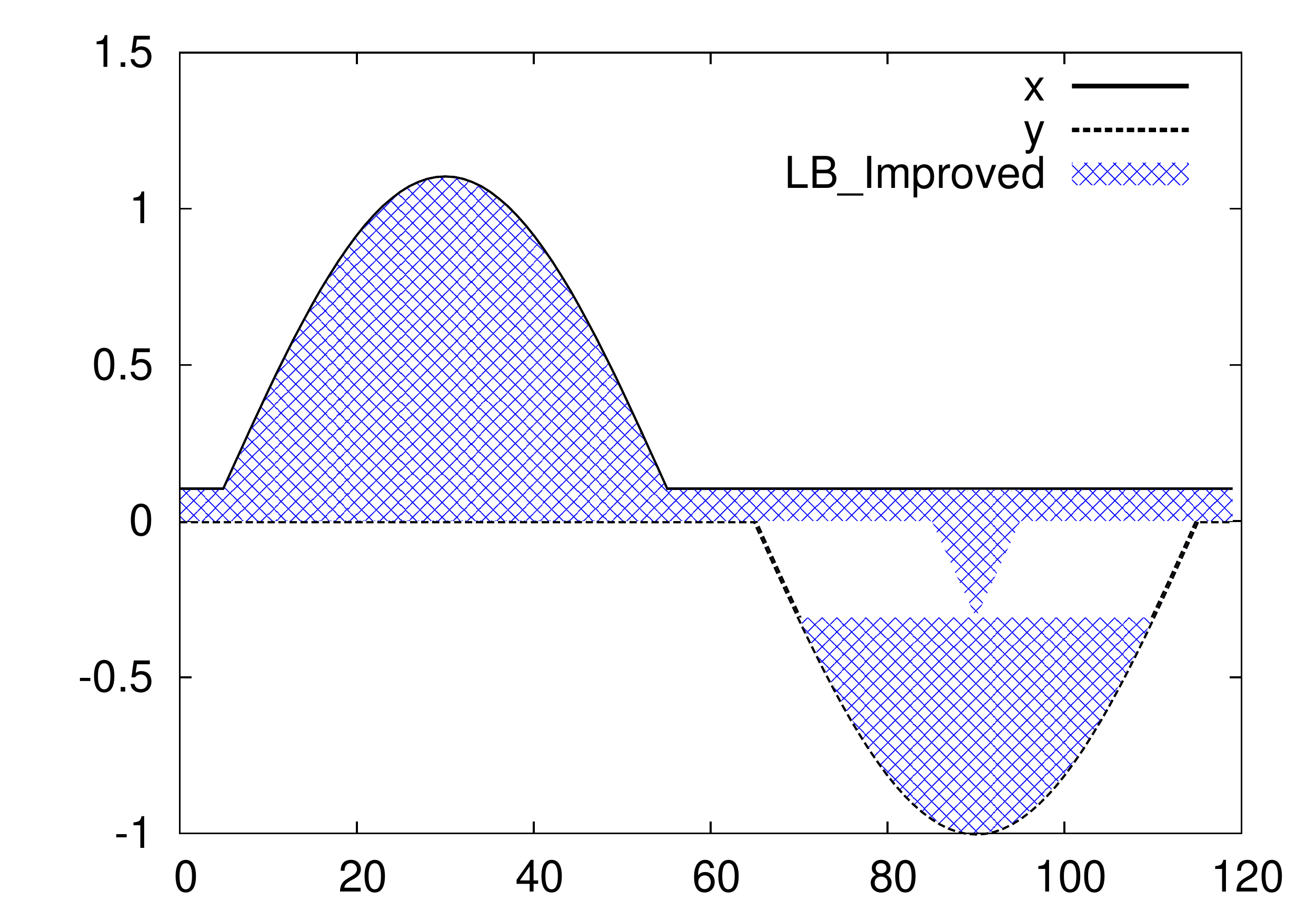}
\caption{\label{fig:pykeogh-3}LB\_Improved example: the area of the marked region is $\text{LB\_Improved}_1(x,y)$}
\end{figure}

\begin{corollary}
Given two equal-length time series $x$ and $y$ and $1\leq p<\infty$, then
 $\text{LB\_Improved}_p(x,y)$  is a lower bound
to the DTW: $\text{DTW}_p(x,y)  \geq \text{NDTW}_p(x,y) \geq   \text{LB\_Improved}_p(x,y)$.
\end{corollary}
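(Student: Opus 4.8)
The plan is to set $h = H(x,y)$, the projection of $x$ on the envelope of $y$, and apply Theorem~\ref{thm:mainthm} directly. Since $h$ satisfies the hypothesis of that theorem by construction (for each index $i$ we have $x_i \geq h_i \geq U(y)_i$, or $x_i \leq h_i \leq L(y)_i$, or $h_i = x_i$), we get for $1\leq p<\infty$
\begin{align*}
\text{DTW}_p(x,y)^p \geq \text{NDTW}_p(x,y)^p \geq \Vert x-h\Vert_p^p + \text{NDTW}_p(h,y)^p = \text{LB\_Keogh}_p(x,y)^p + \text{NDTW}_p(h,y)^p,
\end{align*}
where the last equality is just the definition $\text{LB\_Keogh}_p(x,y)=\Vert x-H(x,y)\Vert_p$.

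Next I would bound the residual term $\text{NDTW}_p(h,y)^p$ from below. The DTW (and NDTW) is symmetric, so $\text{NDTW}_p(h,y) = \text{NDTW}_p(y,h)$, and applying the first bullet of the LB\_Keogh corollary to the pair $(y,h)$ gives $\text{NDTW}_p(y,h) \geq \text{LB\_Keogh}_p(y,h) = \text{LB\_Keogh}_p(y,H(x,y))$. Substituting this into the inequality above yields
\begin{align*}
\text{DTW}_p(x,y)^p \geq \text{NDTW}_p(x,y)^p \geq \text{LB\_Keogh}_p(x,y)^p + \text{LB\_Keogh}_p(y,H(x,y))^p = \text{LB\_Improved}_p(x,y)^p,
\end{align*}
and taking $p$-th roots finishes the proof.

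There is no real obstacle here; the argument is a two-line chaining of results already established. The only points requiring care are (i) noting that $H(x,y)$ is an admissible choice of $h$ for Theorem~\ref{thm:mainthm}, which is immediate from the definition of the projection, and (ii) invoking symmetry of NDTW so that the LB\_Keogh lower bound can be applied to the pair $(y,h)$ in that order, matching the definition of $\text{LB\_Improved}_p$. The restriction $1\leq p<\infty$ is what lets us add the $p$-th powers cleanly; the $p=\infty$ case is excluded from the statement, consistently with how $\text{LB\_Improved}$ was defined.
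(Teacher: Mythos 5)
Your proposal is correct and follows essentially the same route as the paper: apply Theorem~\ref{thm:mainthm} with $h=H(x,y)$, then bound the residual $\text{NDTW}_p$ term below by $\text{LB\_Keogh}_p(y,H(x,y))$ via a second application of the same result. Your explicit remark about the symmetry of NDTW (to pass from $\text{NDTW}_p(H(x,y),y)$ to $\text{NDTW}_p(y,H(x,y))$) merely makes precise a step the paper leaves implicit.
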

\begin{proof}
%
Recall that $\text{LB\_Keogh}_p(x,y)=\Vert x-H(x,y) \Vert_p$. 
First apply Theorem~\ref{thm:mainthm}: $ \text{DTW}_p(x,y)^p  \geq \text{NDTW}_p(x,y)^p  \geq   \text{LB\_Keogh}_p(x,y)^p + \text{NDTW}_p(H(x,y),y)^p$.
Apply Theorem~\ref{thm:mainthm} once more:
$ \text{NDTW}_p(y,H(x,y))^p  \geq   \text{LB\_Keogh}_p(y,H(x,y))^p$. 
By substitution, we get
$ \text{DTW}_p(x,y)^p  \geq \text{NDTW}_p(x,y)^p  \geq   \text{LB\_Keogh}_p(x,y)^p + \text{LB\_Keogh}_p(y,H(x,y))^p $ thus proving the result.
\end{proof}


Algo.~\ref{algo:lbimproved} shows how to apply LB\_Improved as a two-step process. Initially, for each candidate $c$,
we compute the lower bound $\text{LB\_Keogh}_{1}(c,a)$ (see lines~\ref{line:lbkeoghstart} to \ref{line:lbkeoghend}). If this lower bound  is sufficiently large, the candidate is 
discarded (see line~\ref{line:lbkeoghdiscard}), otherwise we add $\text{LB\_Keogh}_1(a,H(c,a))$ to $\text{LB\_Keogh}_{1}(c,a)$, in effect computing $\text{LB\_Improved}_1(c,a)$ (see lines~\ref{line:lbimprovedstart} to \ref{line:lbimprovedend}). If this larger lower bound is sufficiently large, the candidate
is finally discarded (see line~\ref{line:lbimproveddiscard}). Otherwise, we compute the full DTW\@.
If $\alpha$ is the fraction of candidates pruned by LB\_Keogh, at most
$(2N+3)n+5(1-\alpha)Nn$~comparisons between data points are required to process a database
containing $N$~time series.

\begin{algorithm}
\begin{small}
 \begin{algorithmic}[1]
\STATE \textbf{input} a time series $a$ indexed from $1$ to $n$
\STATE \textbf{input} a set $S$ of candidate time series
\RETURN the nearest neighbor $B$ to $a$ in $S$ under $\text{DTW}_1$
\STATE $U,L \leftarrow \text{envelope}(a)$
\STATE $b\leftarrow \infty$
\FOR {candidate $c$ in $S$}
\STATE copy $c$ to $c'$\label{line:copy}
\STATE $\beta \leftarrow 0$\label{line:lbkeoghstart}
\FOR{$i \in \{1,2,\ldots,n\}$}
\IF{$c_i > U_i$}
\STATE $\beta \leftarrow \beta + c_i-U_i$
\STATE $c'_i = U_i$\label{line:modif1}
\ELSIF{$c_i < L_i$}
\STATE $\beta \leftarrow \beta + L_i-c_i$
\STATE $c'_i = L_i$\label{line:lbkeoghend}\label{line:modif2}
\ENDIF
\ENDFOR
\IF{$\beta < b$}\label{line:lbkeoghdiscard}
\STATE $U',L' \leftarrow \text{envelope}(c')$\label{line:lbimprovedstart}
\FOR{$i \in \{1,2,\ldots,n\}$}
\IF{$a_i > U'_i$}
\STATE $\beta\leftarrow \beta + a_i- U'_i$
\ELSIF{$a_i < L'_i$}
\STATE $\beta\leftarrow \beta + L'_i- a_i$\label{line:lbimprovedend}
\ENDIF
\ENDFOR
\IF{$\beta < b$}\label{line:lbimproveddiscard}
\STATE $t\leftarrow \text{DTW}_1(a,c)$
\IF{$t < b$}
\STATE $b\leftarrow t$
\STATE $B\leftarrow c$
\ENDIF
\ENDIF
\ENDIF
\ENDFOR
 \end{algorithmic}
\end{small}
\caption{\label{algo:lbimproved}LB\_Improved-based Nearest-Neighbor algorithm
}
\end{algorithm}


 \section{Comparing LB\_Keogh and LB\_Improved}
\label{sec:comparing}
%
%
%
%
 %

In this section, we benchmark Algorithms~\ref{algo:lbkeogh}~and~\ref{algo:lbimproved}.
We know that the LB\_Improved approach  has at least the pruning power of the LB\_Keogh-based approach,
but does more pruning translate into a faster nearest-neighbor retrieval under the  DTW distance?

We implemented the algorithms in C++ and called the functions from Python scripts.
We used the GNU GCC 4.0.2 compiler on  an Apple Mac~Pro, 
having two  Intel Xeon dual-core processors running at 2.66\,GHz with 2\,GiB of RAM\@. 
All data was loaded in memory before the experiments, and no thrashing was observed.
We measured  the wall clock total time.
In all experiments, we benchmark
nearest-neighbor retrieval under the $\text{DTW}_1$ with the locality constraint $w$  set at 10\% ($w=n/10$).
To ensure reproducibility, our source code is freely available~\cite{googlelbimproved},
including the script used to generate synthetic data sets. We compute the
full DTW using a straight-forward $O(n^2)$-time dynamic programming algorithm.

\subsection{Synthetic data sets}
 
We tested our algorithms using the  Cylinder-Bell-Funnel~\cite{921732} and
 Control Charts~\cite{pham1998ccp} data sets, as well as over a database of random walks.
 We generated 1~000-sample random-walk time series using the formula   $x_i = x_{i-1}+N(0,1)$ and $x_1=0$.
 Results for the Waveform and Wave+Noise data sets are  similar and omitted.
 
 For each data set, we generated a database of 10~000~time series by adding 
 randomly chosen items. The order of the candidates is thus random.
 Fig.~\ref{fig:perf1}, \ref{fig:perf2} and \ref{fig:perf3} show
 the average timings and pruning ratio averaged over 20~queries based on 
 randomly chosen
 time series as we consider larger and large fraction of the database.
 LB\_Improved prunes between 2 and 4 times more candidates and it is faster by a factor between 1.5 and 3.

  
\begin{figure*}
\centering
  \subfloat[Average Retrieval Time]{\includegraphics[width=0.45\textwidth]{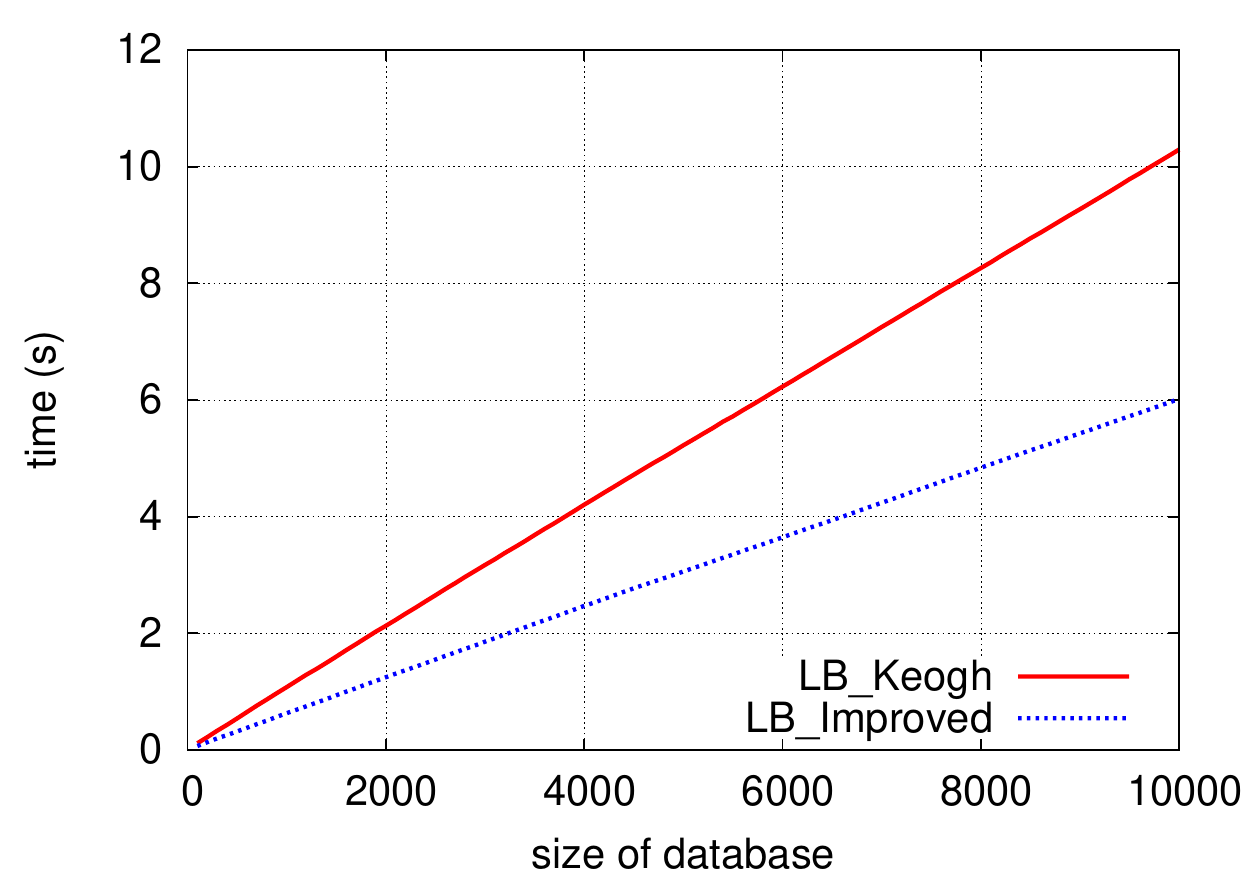}}
  \subfloat[Pruning Power]{\includegraphics[width=0.45\textwidth]{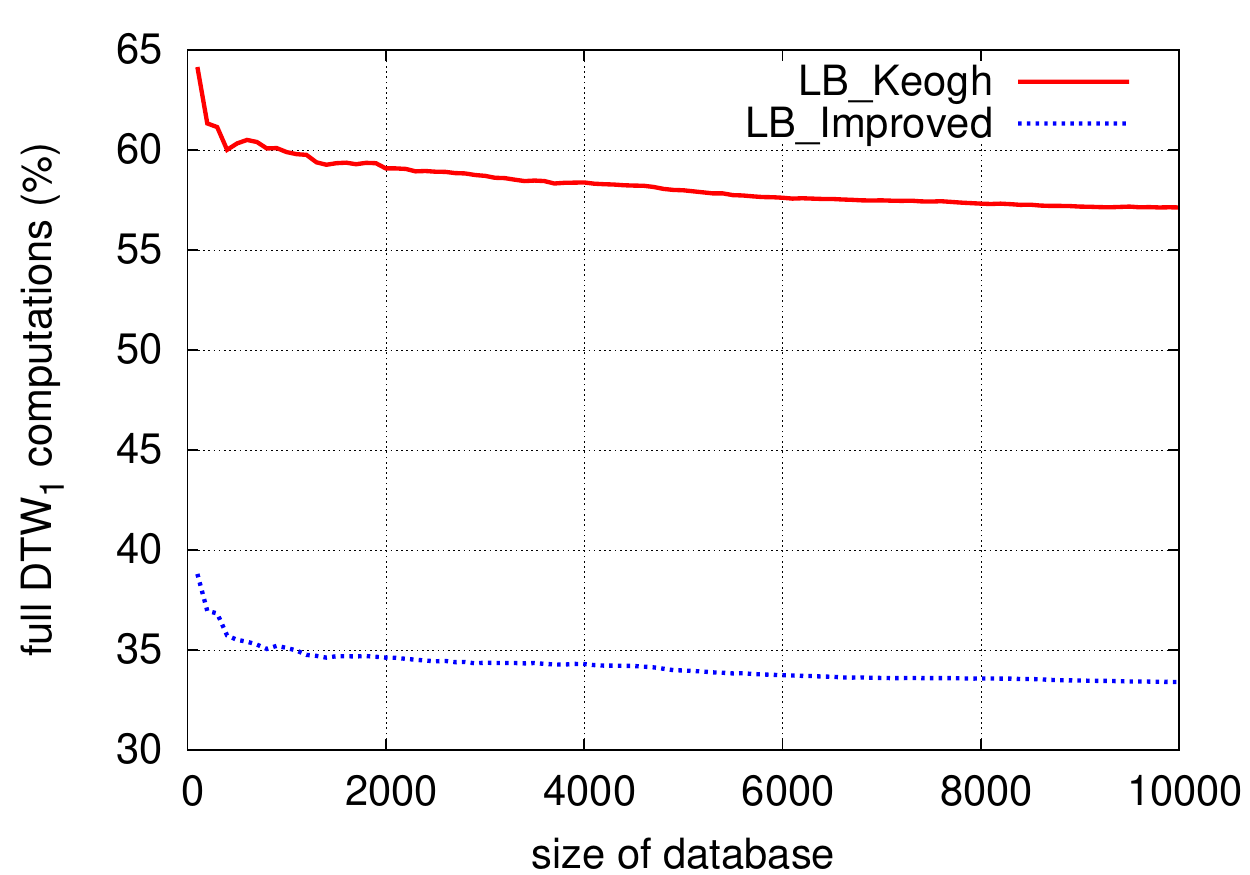}} 
\caption{Nearest-Neighbor Retrieval for the Cylinder-Bell-Funnel data set\label{fig:perf1}}
\end{figure*}

\begin{figure*}
\centering
  \subfloat[Average Retrieval Time]{\includegraphics[width=0.45\textwidth]{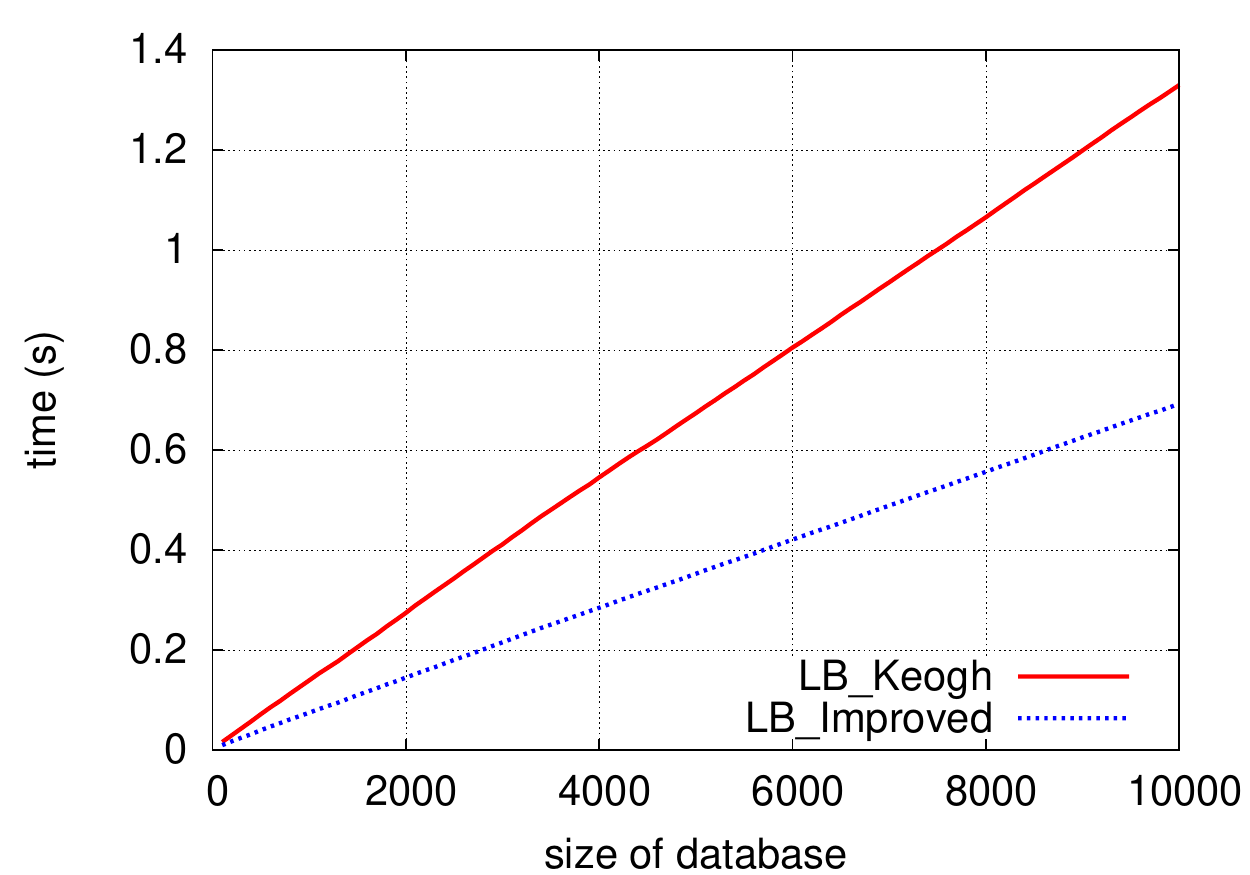}}
  \subfloat[Pruning Power]{\includegraphics[width=0.45\textwidth]{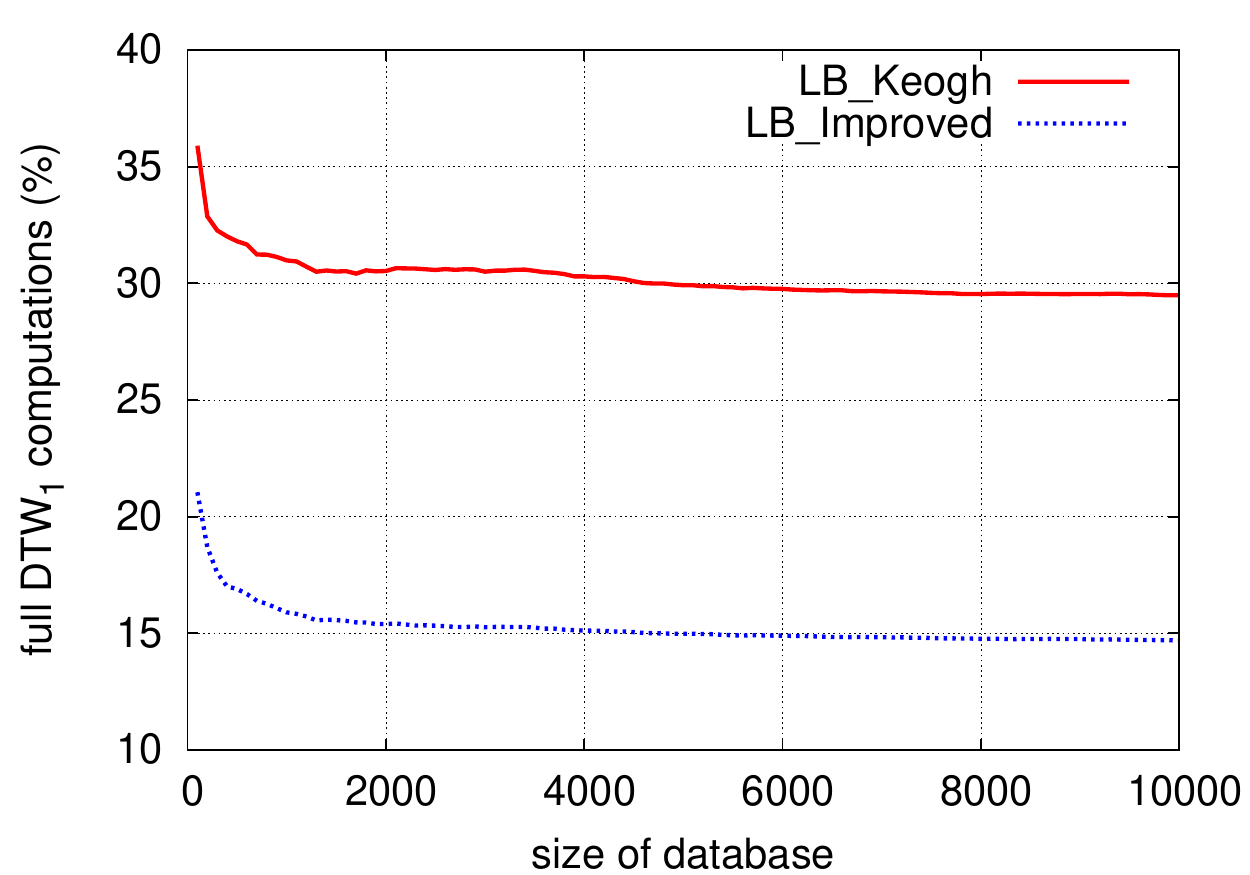}} 
\caption{Nearest-Neighbor Retrieval for the Control Charts data set\label{fig:perf2}}
\end{figure*}

\begin{figure*}
\centering
  \subfloat[Average Retrieval Time]{\includegraphics[width=0.45\textwidth]{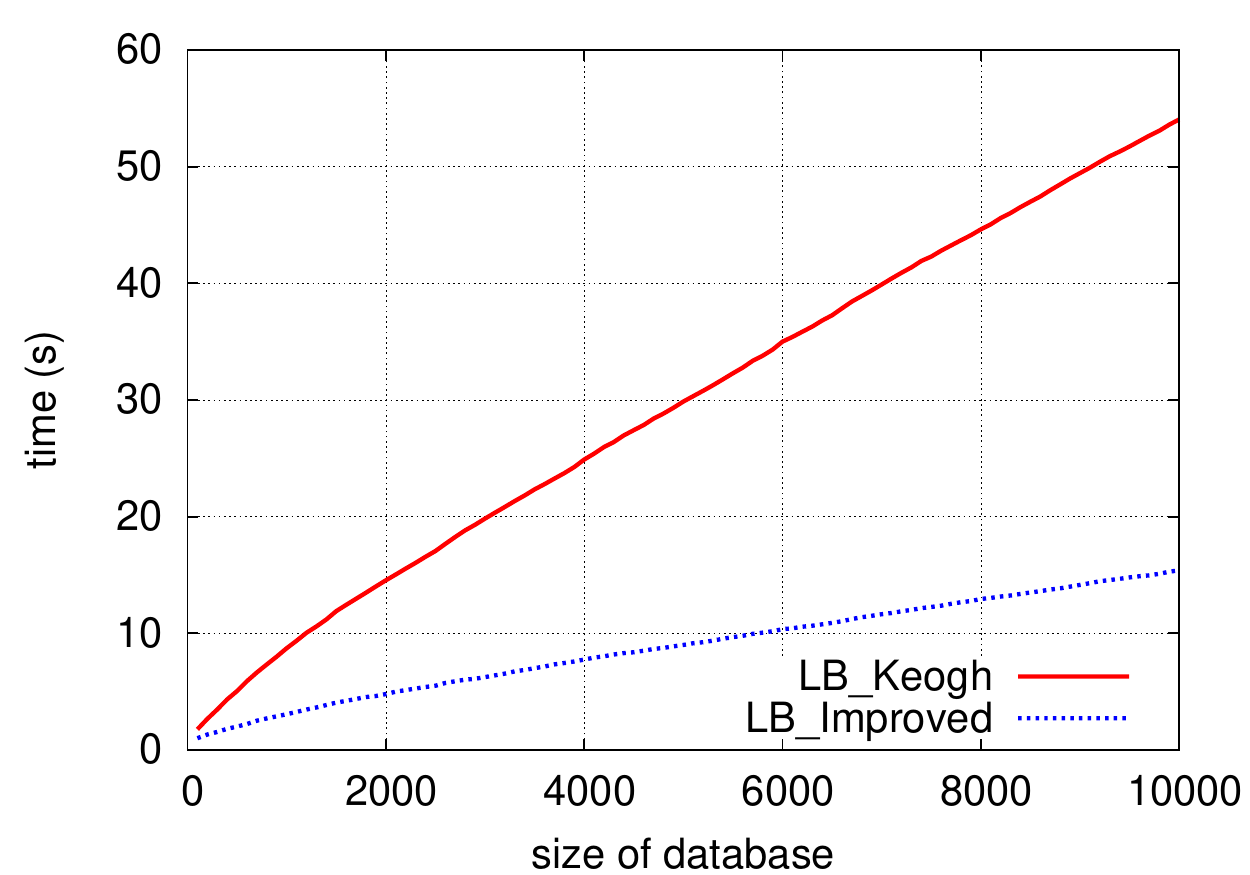}}
  \subfloat[Pruning Power]{\includegraphics[width=0.45\textwidth]{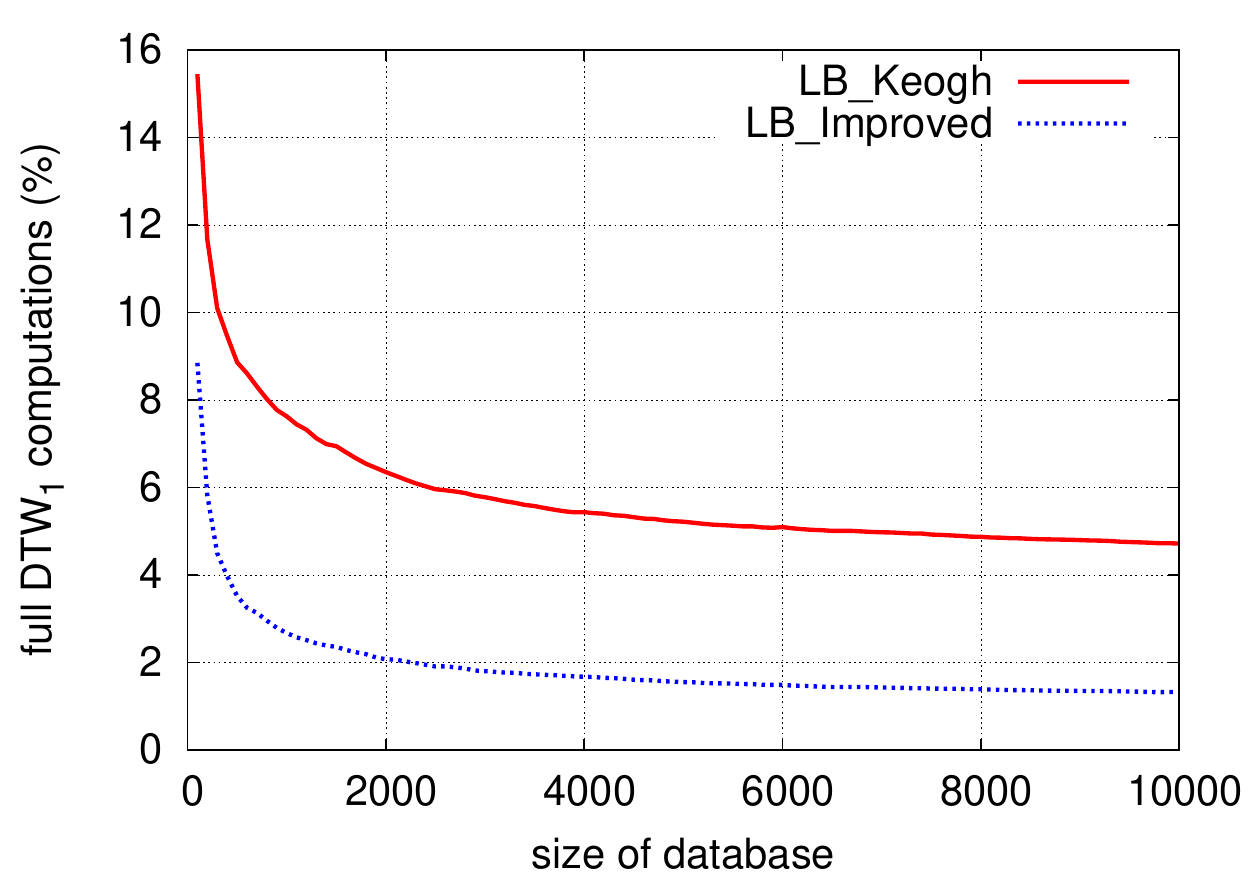}} 
\caption{Nearest-Neighbor Retrieval for the random-walk data set\label{fig:perf3}}
\end{figure*}

\subsection{Shape data sets}

 For the rest of the section, we considered a large collection of time-series derived from shapes~\cite{keogh2006lse,keoghshapedata}.
The first data set is made of heterogeneous shapes which resulted in   5~844 1~024-sample times series.
The second data set is an arrow-head data set with  of 15~000 251-sample time series.
We shuffled randomly each data set so that candidates appear in random order.  We extracted 50~time series 
from each data set, and we present the average nearest-neighbor retrieval times and pruning power
as we consider various fractions of each database (see Fig.~\ref{fig:perf4} and \ref{fig:perf5}).
The results are similar:  LB\_Improved has twice the pruning power and is
faster by a factor of 3.

\begin{figure*}
\centering
  \subfloat[Average Retrieval Time]{\includegraphics[width=0.45\textwidth]{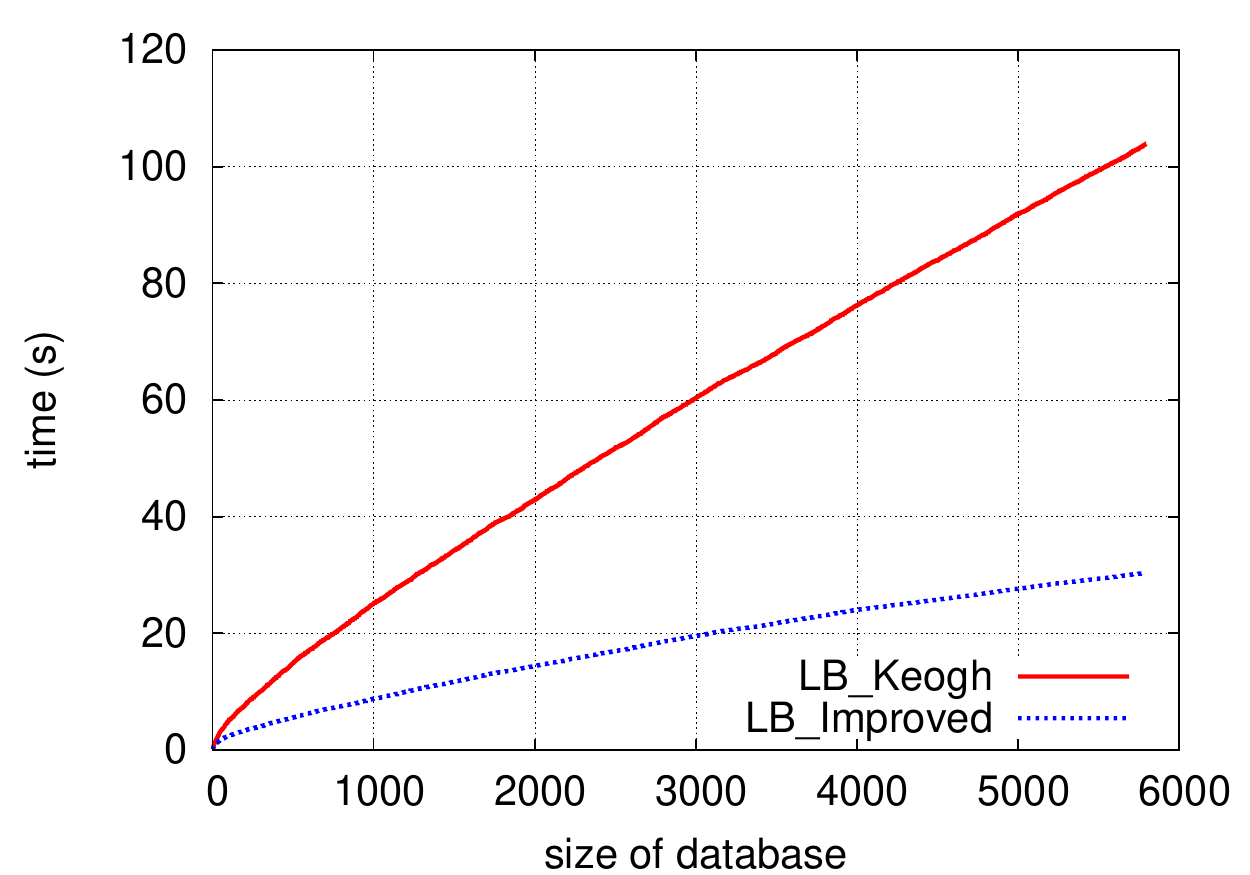}}
  \subfloat[Pruning Power]{\includegraphics[width=0.45\textwidth]{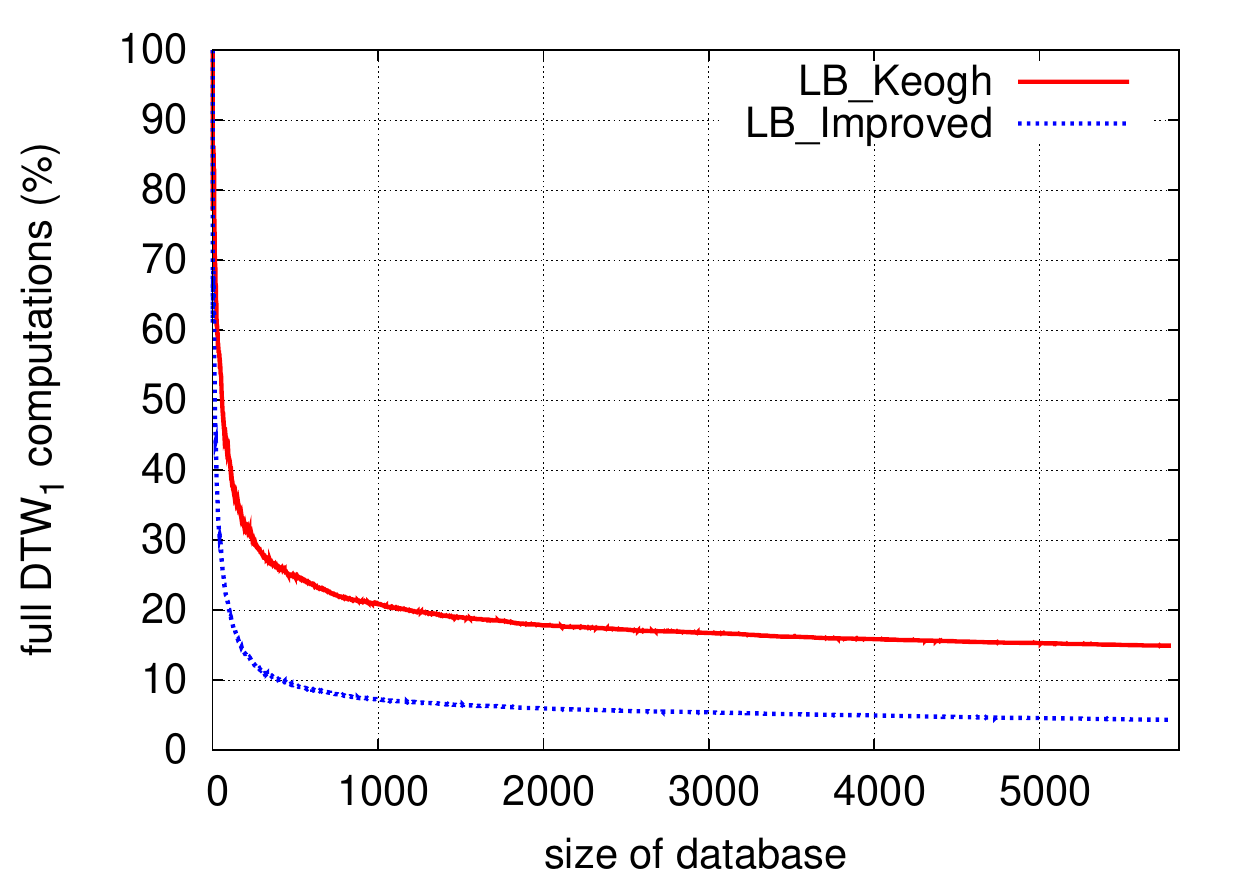}}
\caption{Nearest-Neighbor Retrieval for the heterogeneous shape data set\label{fig:perf4}}
\end{figure*}

\begin{figure*}
\centering
  \subfloat[Average Retrieval Time]{\includegraphics[width=0.45\textwidth]{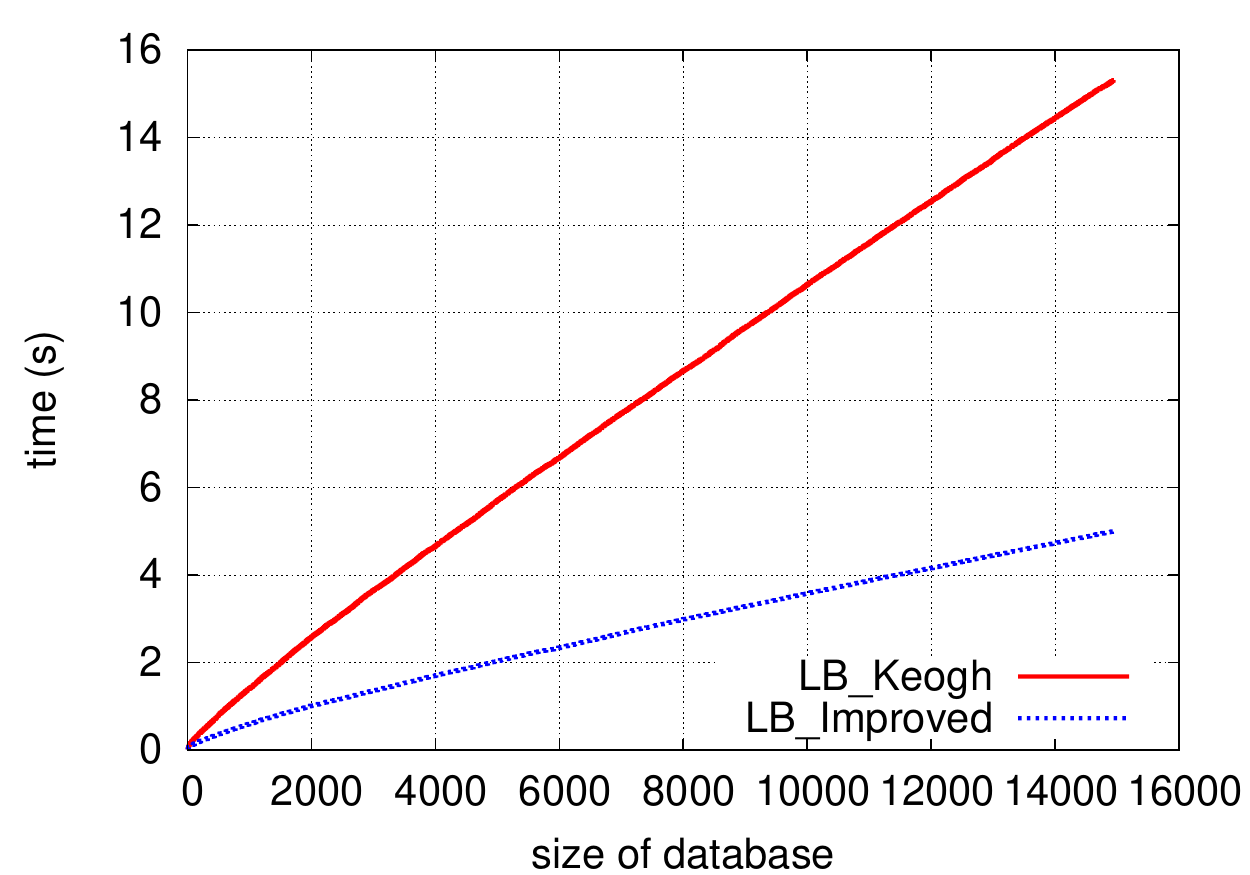}}
  \subfloat[Pruning Power]{\includegraphics[width=0.45\textwidth]{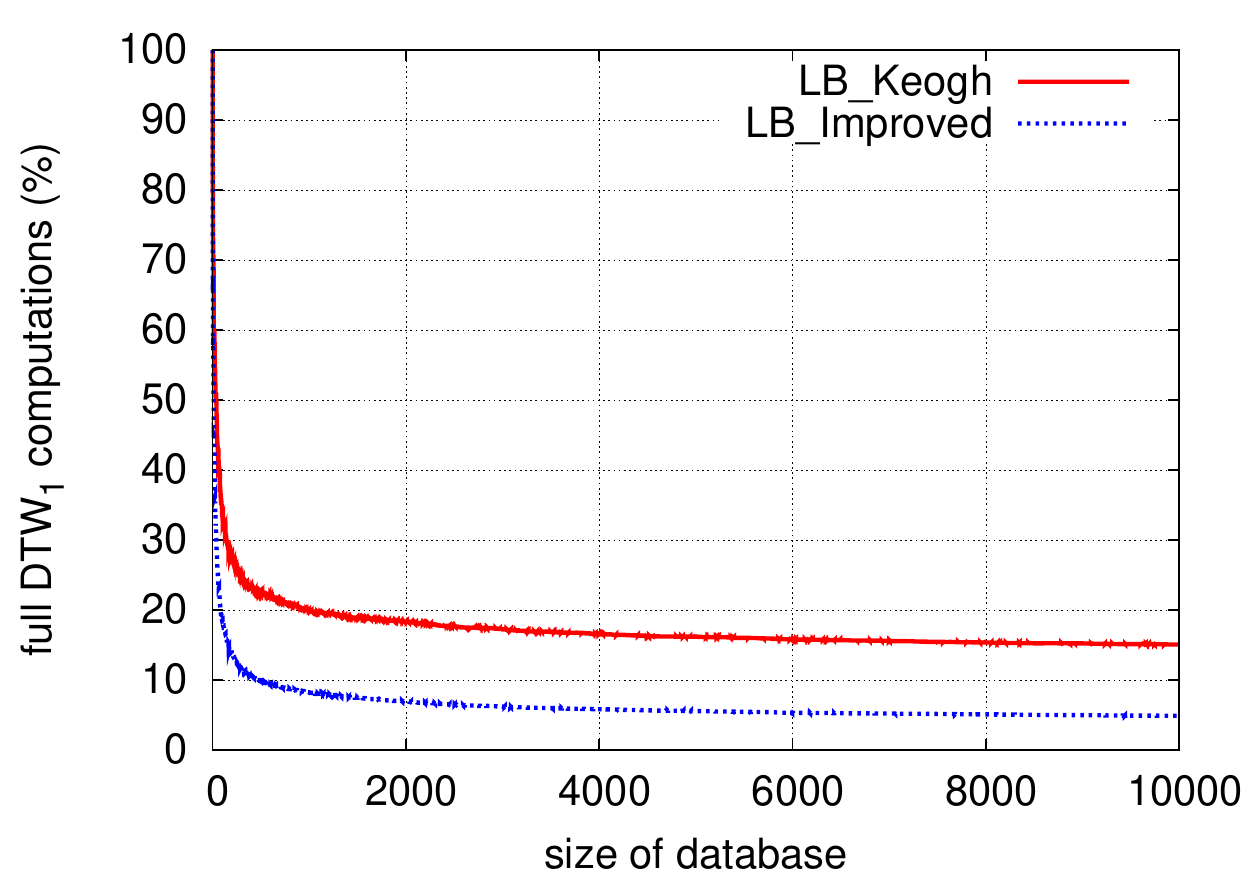}}  
\caption{Nearest-Neighbor Retrieval for the arrow-head shape data set\label{fig:perf5}}
\end{figure*}

 \section{Conclusion}
 
 We have shown that a two-pass pruning technique can improve the retrieval speed
 by up to three times in several time-series databases. We do not use more memory. 
 
 We expect to be able to significantly accelerate the retrieval with
 parallelization. Several instances of Algo.~\ref{algo:lbimproved} can
 run in parallel as long as they can communicate the distance between
 the time series and the best candidate.

 \section*{Acknowledgements}
 The author is supported by NSERC 
grant 261437 and FQRNT grant 112381.

\singlespacing

\bibliographystyle{elsart-num}
\bibliography{../maxminalgo,../lbkeogh} 

\end{document}